\newcommand{\RN}[1]{%
	\textup{\uppercase\expandafter{\romannumeral#1}}%
}
\newcommand{\bs}[1]{\boldsymbol{#1}}
\newcommand{\R}{{\mathds{R}}}
\newcommand{\ES}{{\mathbf S}}
\newcommand{\Na}{{N_{\rm a}}}
\newcommand{\No}{{N_{\rm o}}}
\newcommand{\Zz}{\bs{Z}}
\newcommand{\Am}[4]{A_{#1#2}^{#3#4}}
\newcommand{\af}[1]{{\color{black}{#1}}}
\begin{document}

\title{Analysis and control of agreement and disagreement opinion cascades\thanks{This research has been supported in part by NSF grant CMMI-1635056, ONR grant N00014-19-1-2556, ARO grant W911NF-18-1-0325, DGAPA-UNAM PAPIIT grant IN102420, Conacyt grant A1-S-10610, and by NSF Graduate Research Fellowship DGE-2039656. Any opinion, findings, and conclusions or recommendations expressed in this material are those of the authors and do not necessarily reflect the views of the NSF.}
}


\author{Alessio Franci         \and
        Anastasia Bizyaeva     \and
        Shinkyu Park		   \and
        Naomi Ehrich Leonard
}


\institute{Alessio Franci (Corresponding Author) \at
              Math Department\\
              National Autonomous University of Mexico\\
              04510 Mexico City, Mexico\\
              \email{afranci@ciencias.unam.mx}     
           \and
           Anastasia Bizyaeva, Shinkyu Park, Naomi Ehrich Leonard \at
           Department of Mechanical and Aerospace Engineering\\
           Princeton University\\
           Princeton, NJ 08544\\
           \email{bizyaeva@princeton.edu, shinkyu@princeton.edu, naomi@princeton.edu}
}

\date{Received: date / Accepted: date}

\maketitle

\begin{abstract}
We introduce and analyze a continuous time and state-space model of opinion cascades on networks of large numbers of agents that form opinions about two or more options.
By leveraging our recent results on the emergence of agreement and disagreement states, we introduce novel tools to analyze and control agreement and disagreement opinion cascades. New notions of agreement and disagreement centrality, which depend only on network structure, are shown to be key to characterizing the nonlinear behavior of agreement and disagreement opinion formation and cascades. Our results are relevant for the analysis and control of opinion cascades in real-world networks, including biological, social and artificial networks, and for the design of opinion-forming behaviors in robotic swarms. We illustrate an application of our model 
to a multi-robot task-allocation problem
and discuss extensions and future directions opened by our modeling framework.

\keywords{Opinion dynamics \and Centrality Indices \and Complex Contagions \and Networked control \and Robot Swarms \and Task Allocation}
\end{abstract}

\section{Introduction}
\label{sec:Intro}

Network cascades are classically modeled as the spread of a binary transition from an inactive (e.g., uninformed, not infected) to an active (e.g., informed, infected) state. Models have been used to examine a variety of phenomena, including the spread of fads~\citep{gladwell2006,bikhchandani1992}, competing technologies~\citep{arthur1989competing}, diffusion of innovations~\citep{valente1995network,rosa2013non}, and the spread of influence~\citep{kempe2003maximizing}. The linear threshold model (LTM), introduced in~\cite{granovetter1978threshold,schelling2006micromotives}, 
defines a basic modeling hypothesis for studying network cascades, namely, discrete time and binary states of agents that update as a threshold function of the states of neighbors~\citep{watts2002simple,lim2015simple,garulli2015analysis,rossi2017threshold}. Recently, the LTM was extended to  multi-layer interaction networks~\citep{yaugan2012analysis,salehi2015spreading,zhong2020influence}. 

In an opinion formation context the situation is more general. When agents form opinions about a set of options, the transition from an unopinionated to an opinionated state is not a binary process because the opinionated state can correspond to a variety of distinct opinion configurations. Even in the simplest case of two options, the opinionated state can evolve in two qualitatively distinct directions, that is, preferring option A or preferring option B. Thus, for opinion cascades, not only does the active or inactive nature of an agent's neighbors matter but also the opinion state of neighbors, i.e., their preferences among options. That is, the opinion state to which an agent will converge during an opinion cascade  depends not only on the agent's susceptibility to forming an opinion (i.e., its activation threshold) but also the agent's tendency to align with, or reject, its neighbors' opinions. Further, 
the transition from being unopinionated to holding a strong opinion in favor or disfavor of one or more options can be more or less continuous, both in time and in strength, and the graded transition from one opinion state to another cannot be captured in binary models. A continuous time and state-space approach to threshold cascade behavior was explored in~\cite{zhong2019continuous}. A multi-option, agent-based approach to opinion cascades was studied in~\cite{pilditch2017opinion}. The approach followed in the present paper considers both the presence of multiple options and multiple possible opinion states and a process that is continuous both in time and in state.

In recent work~\citep{bizyaeva2020general}, we developed a general model of continuous time and state-space opinion formation for an arbitrary number of agents exchanging information over a communication network and forming opinions about an arbitrary number of options. Our model is both analytically tractable and general because it is motivated by a tractable and general model-independent theory of opinion formation~\citep{AF-MG-AB-NEL:20}. In particular, very general opinion-formation behaviors like consensus, dissensus, and various intermediate forms of agreement and disagreement, can all be realized and controlled in our model with only a handful of key parameters with clear real-world interpretations (e.g., parameters that represent the agents' level of cooperativity or competitivity). 

In our model, opinion formation happens through a bifurcation.  \af{While much can be investigated with linear models (see, e.g., \cite{Mateo2019}), the nonlinearity of our model allows for richer possibilities.}  Notably, the resulting opinion-formation process can be continuous or switch-like, depending on the nature of the bifurcation and the way the model crosses it~\citep{AF-MG-AB-NEL:20}. Further, when options are close to equally valuable, there is multistability of either agreement or disagreement opinion configurations. In this case, the opinion-forming bifurcation breaks deadlock by pushing the state away from the neutral (deadlock) equilibrium and towards one of the stable opinion states. The key  (bifurcation) parameter of our model is the ``attention'' $u_i$ that agent $i$ pays to its neighbors in the communication network. This parameter can also be interpreted as the urgency to form an opinion that the agent perceives. When agents have low attention, they form opinions linearly according to the information-rich inputs they receive about the available options. Agents that receive no inputs (i.e., those that do not gather information about the options) remain neutral or unopinionated. Conversely, when agents have high attention, they form a robust opinion through nonlinear network interactions, independently of, but also influenced by, their inputs.

In this paper, we leverage the state-dependent attention dynamics introduced in~\cite{bizyaeva2020general} to show that a positive feedback coupling between attention and opinion strength at the single-agent level provides the basic mechanism underlying the emergence of tunable opinion cascades at the network level. The 
positive feedback makes single-agent opinion formation switch-like and provides it with a tunable threshold, like in binary models of opinion cascade, while also preserving the continuous-time, graded evolution of the opinion and attention states.  We develop a detailed geometrical analysis of the single-agent attention-opinion dynamics for the case of three options, which extends the two-option geometrical analysis in~\cite{bizyaeva2020general}. The two cases have similarities (i.e., hysteresis and switches between unopinionated and opinionated states) but also  differences (i.e., the much richer menu of possible opinionated configurations for three options). 

Because of the intrinsic multistability of opinion configurations, our model naturally captures ``afterthoughts'' during the opinion cascade process: an early opinionated agent can change its opinion under the influence of sufficiently strong signals from other agents. Binary threshold models do not exhibit the same degree of behavioral flexibility. \af{Further, due to its continuous time and state-space nature, our model features both  simple and complex contagion (in the sense of~\cite{centola2007complex}). That is, along the same opinion cascade process, contagion can be simple for some agents and complex for others}.

To analyze and control the emergent behavior of the cascade model on undirected network topologies, we introduce the notions of {\it agreement centrality}, {\it disagreement centrality} and {\it signed disagreement centrality}. Agreement centrality coincides with the classical notion of eigenvector centrality~\citep{bonacich1972,bonacich2007some}. Disagreement and signed disagreement centralities are new. Our definitions are motivated by the tight connections existing between the patterns of opinion formation emerging at an opinion forming bifurcation and the spectral properties of the underlying adjacency matrix~\citep{bizyaeva2020patterns}. We show that our definitions provide powerful tools to analyze and control opinion cascades and their sensitivity to cascade seed inputs.

\af{Our opinion-formation and opinion-cascade modeling, analysis, and control tools are naturally suited for engineering, biological, and sociopolitical applications. For example, biological and artificial swarms in motion can make both agreement decisions, such as to move in the same direction like schooling fish~\citep{Couzin2005a,Couzin2011,Leonard2012}, and disagreement decisions, such as to go in different directions during search and foraging~\citep{traniello1989foraging,zedadra2017multi}. Sociopolitical networks, both at the citizen and political elite level, are made of agents that can more or less rapidly change between agreement and disagreement behaviors, depending on the topic, time and context~\citep{McCarty2019,macy2019opinion}.

Two main types of collective decision making, across applications, can broadly be classified as ``best of $n$" and ``task allocation"~\citep{valentini2017best,gerkey2004formal,Brambilla2013,schranz2020swarm}. In the former, agents must reach a consensus about the best-of-$n$ options, while in the latter, agents must allocate across the $n$ options. Thus, solving a best-of-$n$ problem requires a specific type of agreement opinion formation. Indeed, an earlier special case of our model was originally inspired by best-of-$n$ collective decision making in house hunting honeybees~\citep{Gray2018}.

Alternatively, solving a task-allocation problem requires a specific type of disagreement opinion formation. In this paper, we illustrate this fact with an original example describing distributed allocation over three tasks by a swarm of robots.  With this example, we highlight the importance of task-allocation cascades.  
The result is a dynamical task-allocation algorithm controlled by just a few parameters. Given its continuous time and state-space nature, the algorithm is naturally suited to implementation in low-level (e.g., analog) hardware, making it relevant for any  multi-robot application with restricted computational power.



The paper is structured as follows. In Section~\ref{SEC: general model}, we introduce the general opinion dynamics model of~\cite{bizyaeva2020general} and we rigorously analyze its opinion-forming behavior. In Section~\ref{sec:task allocation}, we 
frame a multi-robot task allocation problem 
using our opinion formation dynamics. In Section~\ref{sec: opinion cascades}, we show how an attention positive feedback mechanism at the agent and network scales engenders opinion cascade with tunable sensitivity in our opinion formation dynamics. To analyze and control opinion cascades, in Section~\ref{sec: cascade control} we introduce new notions of centrality and illustrate their power mechanistically in low-dimensional agreement and disagreement opinion cascade examples. In Section~\ref{sec:task allocation cascades}, we show how introducing a cascade mechanism makes the multi-robot task-allocation model introduced in Section~\ref{sec:task allocation} reactive to changes in task urgency and more robust to initial conditions and parameter uncertainties. Application to large scale complex networks is presented in Section~\ref{sec: complex networks}. Section~\ref{sec: discussion} provides a discussion of results, our notion of agreement and disagreement centralities, and the connection between our notion of opinion cascade and complex contagion. Section~\ref{sec: discussion} also describes  extensions and future research directions opened up by our modeling framework, including time-varying network topologies and communication losses, and  applications beyond multi-robot task allocation.}



\section{A general model of agreement and disagreement opinion formation}
\label{SEC: general model}

We consider a network of $\Na$ agents forming opinions about $\No$ options. Let $z_{ij}\in\R$ denote the opinion of agent $i$ about option $j$. The greater $z_{ij}$ is, the stronger is the preference of agent $i$ for option $j$. We assume that
\begin{equation}\label{EQ:relative opinions}
z_{i1}+\cdots+z_{i\No}=0.  
\end{equation}
This assumption means that we are only interested in modeling an agent's {\it relative} preference for the various options. Indeed, to determine if agent $i$ prefers option $j$ over option $l$, $l\neq j$, only the value of $z_{ij}$ relative to $z_{il}$ matters. 
Condition~\eqref{EQ:relative opinions} can also be interpreted as a linearized simplex condition, as detailed in~\cite{AF-MG-AB-NEL:20}.

Let $\Zz_{i} =(z_{i1}, \dots ,z_{i N_{o}})$ be the {\em opinion state of agent $i$}  and $\Zz = (\Zz_{1}, \dots, \Zz_{N_{a}})$ the {\em opinion state of the agent network}. We recall some useful definitions from~\cite{AF-MG-AB-NEL:20,bizyaeva2020general}. Agent $i$ is {\it neutral} if $\Zz_i=\mathbf{0}$. The origin $\Zz=\mathbf{0}$ is called the network's {\it neutral state}. Agent $i$ is \textit{unopinionated} if its opinion state is small, i.e., $\| \Zz_{i}\| \leq \vartheta$, for a fixed threshold $\vartheta \geq 0$. Agent $i$ is {\it opinionated} if $\| \Zz_{i}\| \geq \vartheta$. If $z_{ij} \geq z_{il} - \vartheta$ for all $l \neq j$, then agent $i$ is said to {\it favor} option $j$. If $z_{ij} < z_{il} - \vartheta$ for all $l \neq j$, then agent $i$ is said to {\it disfavor} option $j$. An agent is {\it conflicted} about a set of options if it has near equal and favorable opinions of them all. 
Agents can agree and disagree. Two agents {\it agree} when they are opinionated and have the same qualitative opinion state (e.g., they both favor the same option). They {\it disagree} when they have qualitatively different opinion states. We refer the reader to~\cite{AF-MG-AB-NEL:20} for a geometric illustration of these definitions. The network is in an {\it agreement} state when all agents are opinionated and agree and in a {\it disagreement} state when at least two agents disagree.

Agents' opinions are assumed to evolve in continuous time. The networked opinion dynamics \citep{bizyaeva2020general} are the following:
\begin{subequations}\label{EQ:generic decision dynamics}
	\begin{align}
	& \quad \; \; \;\dot z_{ij}=F_{ij}(\Zz)-\frac{1}{\No}\sum_{l=1}^{\No}F_{il}(\Zz) \label{eq:modela}\\
	&F_{ij}(\Zz)=-d_{ij}z_{ij} + 
	u_{i}\left(  S_{1}\left( \sum_{\substack{k=1}}^\Na\Am{i}{k}{j}{j} z_{kj}\right)+ \sum_{\substack{l\neq j\\l=1}}^\No S_{2}\left( \sum_{k = 1}^\Na \Am{i}{k}{j}{l} z_{kl}\right)\right)+b_{ij},\label{eq:modelb}
	\end{align}
\end{subequations}
where $\dot z_{ij}$ is the rate of change, with respect to time $t$, of the opinion of agent $i$ about option $j$. Observe that, by~(\ref{eq:modela}), $\dot z_{i1}+\cdots+\dot z_{i\No}=0$, which ensures that $z_{i1}(t)+\cdots+z_{i\No}(t)=0$ for all $t>0$, provided $z_{i1}(0)+\cdots+z_{i\No}(0)=0$. The parameter $d_{ij}>0$ models the resistance of agent $i$ to form an opinion about option $j$. Exogenous inputs (e.g., prior biases or incoming information) influencing the opinion formation of agent $i$ about option $j$ are captured in the affine parameter $b_{ij}$. Network interactions are modeled in the {\it adjacency tensor} $A$. There are four types of network interactions \af{represented by elements of the adjacency tensor as follows}:
\begin{enumerate}
	\item intra-agent, same-option (with weights $\Am{i}{i}{j}{j}$)\af{, modeling the influence of agent $i$'s opinion about option $j$ on itself};
	\item intra-agent, inter-option (with weights $\Am{i}{i}{j}{l}$)\af{, modeling the influence of agent $i$'s opinion about option $l$ on the same agent's opinion about option $j$};
	\item inter-agent, same-option (with weights $\Am{i}{k}{j}{j}$)\af{, modeling the influence of agent $k$'s opinion about option $j$ on agent $i$'s opinion about the same option};
	\item inter-agent, inter-option (with weights $\Am{i}{k}{j}{l}$)\af{, modeling the influence of agent $k$'s opinion about option $l$ on agent $i$'s opinion about option $j$}.
\end{enumerate}
Network interactions can be {\it excitatory}, if the associated weight is positive, or {\it inhibitory}, if the associated weight is negative. $S_{q}: \mathds{R} \to [-k_{q1},k_{q2}]$ with $k_{q1},k_{q2} \in \mathds{R}^{>0}$ 
for $q \in \{1,2\}$ is a generic sigmoidal saturating function satisfying constraints  $S_{q}(0) = 0$, $S_{q} '(0) = 1$, $S_{q} ''(0) \neq 0 $, $S_{q}'''(0) \neq 0 $. The two sigmoids $S_1$ and $S_2$ saturate same-option interactions and inter-option interactions, respectively. They are the only nonlinearity of the model. 

The {\it attention parameter} $u_i$ tunes how strongly nonlinear network interactions affect agent $i$ (Figure~\ref{FIG: model}):
\begin{itemize}
	\item for small $u_i$, agent $i$'s opinion formation  is dominated by a linear behavior determined by $d_{ij}$ and $b_{ij}$;
	\item for large $u_i$, agent $i$'s opinion formation is dominated by a nonlinear behavior determined by  network interactions.
\end{itemize}
\begin{figure}
	\centering
	\includegraphics[width=\textwidth]{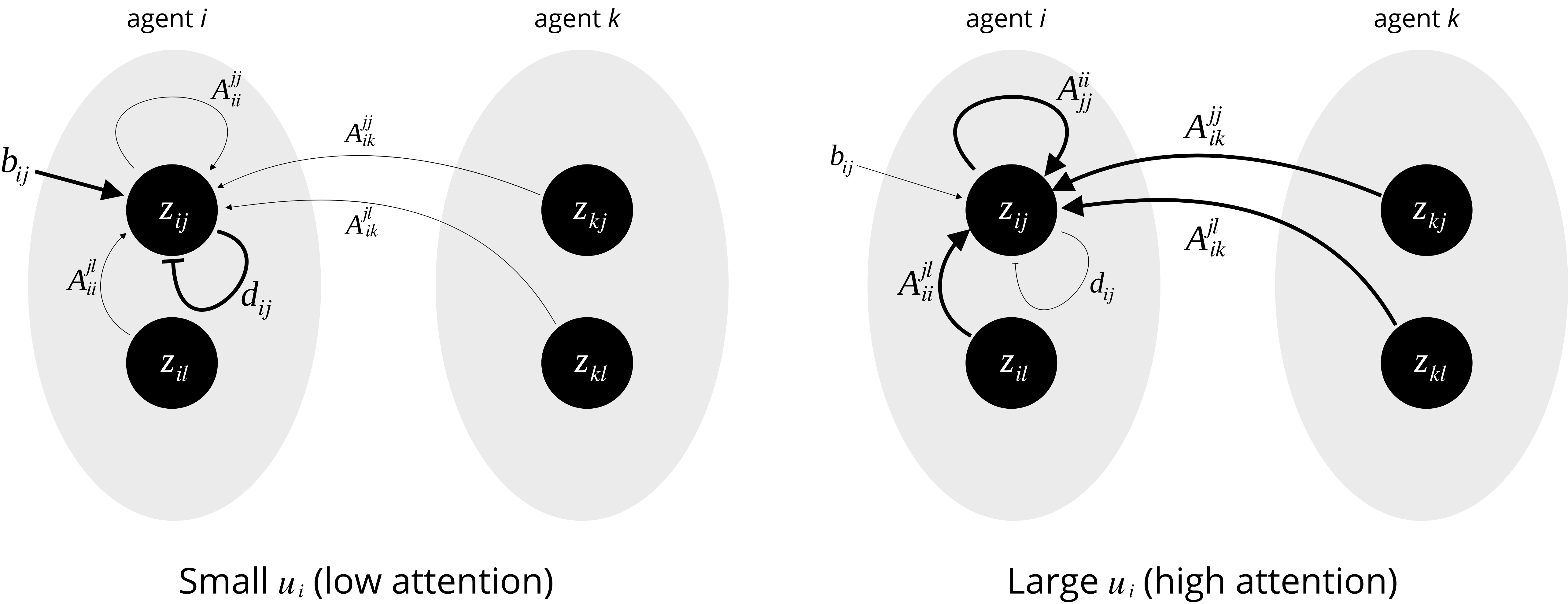}
	\caption{Schematic model description.}\label{FIG: model}
\end{figure}
In the linear regime, an agent develops an opinion proportional to the ratio of its inputs and its resistance, much like a classical leaky evidence integrator~\citep{bogacz2007optimal}. In the nonlinear regime, much richer network opinion formation behaviors can be expected. \af{Model~\eqref{EQ:generic decision dynamics} was inspired by and specializes to the honeybee-inspired model in~\cite{Gray2018}. See also~\cite{bizyaeva2020general} for a discussion of the similarities and difference between model~\eqref{EQ:generic decision dynamics} and previous opinion formation models in the literature.}

\subsection{Agreement and disagreement opinion formations are determined by network spectral properties}

Model~\eqref{EQ:generic decision dynamics} contains hundreds of intrinsic and interaction parameters when $\Na$ and/or $\No$ are sufficiently large. As such, it is intractable in general. However, the model-independent theory in~\cite{AF-MG-AB-NEL:20} shows that even in the simplest case in which all  four types of network interaction are all-to-all and homogeneous, inputs are zero, attention is uniform across the agents, and the resistance is the same for all the agents, a plethora of rich prototypical opinion formations are still possible in model~\eqref{EQ:generic decision dynamics}. 

To study the role of complex inter-agent network interactions,  while preserving analytical tractability, we consider the {\it homogeneous regime} of model~\eqref{EQ:generic decision dynamics}, defined by imposing the following restrictions on the model parameters:
\begin{align}\label{eq: homogeneous params}
d_{ij} &= d > 0,\ u_{i} = u,\ \Am{i}{i}{j}{j} = \alpha \in \mathds{R}\nonumber\\
\Am{i}{i}{j}{l} &= \beta \in \mathds{R},\ \Am{i}{k}{j}{j} = \gamma \tilde{a}_{ik},\ \Am{i}{k}{j}{l} = \delta \tilde{a}_{ik}
\end{align}
where $\alpha,\beta,\gamma,\delta \in \mathds{R}$ and $\tilde{a}_{ik} \in \{ 0, 1 \}$, for all $i,k = 1,\ldots,\Na$, $i \neq k$, and for all $j,l =1,\ldots,\No$, $j \neq l$. The resulting dynamics are
\begin{subequations}\label{EQ:Specialized Dynamics}
	\begin{align}
	& \quad \; \; \;\dot z_{ij}=F_{ij}(\Zz)-\frac{1}{\No}\sum_{l=1}^{\No}F_{il}(\Zz) \label{eq:specialized a}\\
	&F_{ij}(\Zz)=-d z_{ij} + 
	u\left(  S_{1}\left(\alpha z_{ij} + \sum_{\substack{k \neq i \\k = 1 }}^\Na \gamma \tilde{a}_{ik} z_{kj}\right)+ \sum_{\substack{l\neq j\\ l=1}}^\No S_{2}\left( \beta z_{il} + \sum_{ \substack{k \neq i\\ k = 1}}^\Na \delta \tilde{a}_{ik} z_{kl} \right)\right)+b_{ij}.\label{eq:specialized b}
	\end{align}
\end{subequations}
In this regime, the intra-agent network is all-to-all and homogeneous, with weight $\alpha$ for same-option interactions and weight $\beta$ for inter-option interactions. The topology of the inter-agent  network is encoded in the adjacency matrix $\tilde A=[\tilde a_{ik}]\in\R^\Na\times\R^\Na$. For all existing inter-agent links, i.e., those satisfying $\tilde a_{ik}=1$, the inter-agent same-option weight is $\gamma$ and the inter-agent inter-option weight is $\delta$. It is natural to assume that
\begin{equation}\label{eq:self reinf mutual excl}
\alpha \geq 0,\quad\text{and}\quad \beta < 0,
\end{equation}
which means that opinions are self-reinforcing ($\alpha \geq 0$) and mutually exclusive ($\beta < 0$). The two remaining parameters, $\gamma$ and $\delta$, determine whether agents cooperate or compete in forming opinions. More precisely, if $\gamma>\delta$, agents {\it cooperate} in forming opinions, since the opinion of any agent $i$ about any option $j$ is more strongly excited by the opinions of its neighbors about the same option $j$ than by their opinions about all the other options $l\neq j$. 
Conversely, if $\gamma<\delta$, agents {\it compete} in forming opinions, since the opinion of any agent $i$ about any option $j$ is more strongly excited by the opinions of its neighbors about all the other options $l\neq j$ than by their opinions about the same option $j$. 

\af{
\begin{remark}
 Because intra-agent coupling is all-to-all and homogeneous in model~\eqref{EQ:Specialized Dynamics}, options are indistinguishable from a network topology perspective. A natural extension of model~\eqref{EQ:Specialized Dynamics} would be to consider a topologically richer intra-agent network. This generalization is natural in the presence of distinguished options or ``issues", i.e., subgroups of more tightly related options.
\end{remark}
}

For homogeneous networks, the cooperative or competitive nature of agent interactions is the key determinant of the opinion formation behavior. \af{The proof of the following theorem follows by linearizing the opinion dynamics~\eqref{EQ:Specialized Dynamics} at the neutral equilibrium and by studying the kernel of the resulting Jacobian matrix (see~\cite[Theorem 1]{bizyaeva2020patterns} for details). Let $\R\{v_1,\ldots,v_k\}$ be the span of vectors $v_1,\ldots,v_k\in\R^n$ and $\mathbf{1}_n^{\perp}$ be the orthogonal complement to the vector of ones in $\R^n$.
	
	\begin{theorem}\label{THM: dis-agree dicothomy}
		Suppose that $b_{ij}=0$ for all $i=1,\ldots,\Na$ and all $j=1,\ldots,\No$.
		\begin{itemize}
			\item If $\gamma>\delta$, then in model~\eqref{EQ:Specialized Dynamics} opinion formation emerges as a bifurcation from the neutral state for
			\begin{equation}\label{eq:agrement threshold}
			u=u_a:=\frac{d}{\alpha-\beta+\lambda_{max}(\gamma-\delta)}
			\end{equation}
			where $\lambda_{max}$ is the maximum eigenvalue of $\tilde A$. All bifurcation branches are tangent to the $(\No-1)$-dimensional kernel $\R\{v_{max}\}\otimes \mathbf{1}_{\No}^{\perp}$, where $v_{max}$ is the {\em positive} (Perron-Frobenius) eigenvector associated to $\lambda_{max}$. More precisely, each bifurcation branch is tangent to a one-dimensional subspace $\R\{v_{max}\otimes v_{ax}\}$ of $\R\{v_{max}\}\otimes \mathbf{1}_{\No}^{\perp}$, for some $v_{ax}\in\mathbf{1}_{\No}^{\perp}$. In particular, because every agent's opinion state is directly proportional (according to the entries of $v_{max}$) to the same opinion configuration vector $v_{ax}$, {\bf all the opinion-formation bifurcation branches are of agreement type}.
			
			\item If $\gamma<\delta$ and the minimum eigenvalue of $\tilde A$ is simple\footnote{The assumption that the minimum eigenvalue of $\tilde A$ has algebraic multiplicity one is usually true for all sufficiently large complex networks. The case in which this multiplicity is larger can also be analyzed, but for simplicity we do not address it here.}, then in model~\eqref{EQ:Specialized Dynamics} opinion formation emerges as a bifurcation from the neutral state for
			\begin{equation}\label{eq:disagrement threshold}
			u=u_d:=\frac{d}{\alpha-\beta+\lambda_{min}(\gamma-\delta)}
			\end{equation}
			where $\lambda_{min}$ is the minimum eigenvalue of $\tilde A$. All bifurcation branches are tangent to the $(\No-1)$-dimensional kernel $\R\{v_{min}\}\otimes \mathbf{1}_{\No}^{\perp}$,  where $v_{min}$ is the eigenvector associated to $\lambda_{min}$. Each bifurcation branch is tangent to a one-dimensional subspace $\R\{v_{min}\otimes v_{ax}\}$ of $\R\{v_{min}\}\otimes \mathbf{1}_{\No}^{\perp}$, for some $v_{ax}\in\mathbf{1}_{\No}^{\perp}$. 
			Because the vector $v_{min}$ is orthogonal to $v_{max}$, it has mixed-sign entries.  In particular, some agents have opinion state directly proportional to the opinion configuration vector $v_{ax}$, whereas some agents have opinion state  inversely proportional to it, and so {\bf all the opinion-formation bifurcation branches are of disagreement type}.
		\end{itemize}
	\end{theorem}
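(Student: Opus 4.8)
The plan is to follow the linearization route indicated in the statement, but to make the Kronecker (tensor) structure of the Jacobian explicit so that both cases $\gamma>\delta$ and $\gamma<\delta$ fall out of a single spectral computation. First I would verify that with $b_{ij}=0$ the neutral state $\Zz=\mathbf 0$ is an equilibrium: since $S_q(0)=0$, every $F_{ij}(\mathbf 0)=0$ and hence $\dot z_{ij}=0$ by~\eqref{eq:specialized a}. Because $\sum_j\dot z_{ij}=0$ along trajectories, the subspace $\R^{\Na}\otimes\mathbf{1}_{\No}^{\perp}$ is invariant, so the entire analysis can be carried out there. Using $S_q'(0)=1$, the linearization of~\eqref{EQ:Specialized Dynamics} at $\mathbf 0$ simply replaces each sigmoid by its argument.

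Ordering the state by options, the linearized field $F^{\mathrm{lin}}$ has Jacobian
$$J_F=-d\,I_{\No\Na}+u\,I_{\No}\otimes(\alpha I_{\Na}+\gamma\tilde A)+u\,(\ONE_{\No}\ONE_{\No}^{\top}-I_{\No})\otimes(\beta I_{\Na}+\delta\tilde A),$$
where the first Kronecker factor acts on the option index and the second on the agent index, the $\alpha I_{\Na}+\gamma\tilde A$ term encoding same-option coupling and the $\ONE_{\No}\ONE_{\No}^{\top}-I_{\No}$ term encoding the inter-option coupling to all $l\neq j$. The averaging in~\eqref{eq:specialized a} amounts to left-multiplying by the projector $P=I_{\No}-\tfrac1{\No}\ONE_{\No}\ONE_{\No}^{\top}$ on the option factor. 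On $\mathbf{1}_{\No}^{\perp}$ the matrix $\ONE_{\No}\ONE_{\No}^{\top}$ vanishes, so the option factor acts as a scalar and the restricted Jacobian block-diagonalizes as $\mathrm{id}_{\No-1}\otimes M$ with
$$M=\bigl(-d+u(\alpha-\beta)\bigr)I_{\Na}+u(\gamma-\delta)\tilde A.$$
Since the network is undirected, $\tilde A$ is symmetric; if $\lambda$ is an eigenvalue of $\tilde A$ with eigenvector $v$, then $Mv=\mu(\lambda)v$ with $\mu(\lambda)=-d+u(\alpha-\beta)+u(\gamma-\delta)\lambda$, and each $\mu(\lambda)$ carries multiplicity $\No-1$ from the option factor.

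The neutral state loses stability when the largest eigenvalue of the restricted Jacobian crosses zero. Because $\mu$ is affine in $\lambda$ with slope $u(\gamma-\delta)$, when $\gamma>\delta$ the dominant eigenvalue is $\mu(\lambda_{max})$, and setting it to zero gives exactly $u_a$ in~\eqref{eq:agrement threshold}; when $\gamma<\delta$ the dominant one is $\mu(\lambda_{min})$, giving $u_d$ in~\eqref{eq:disagrement threshold}. At threshold the kernel is $\R\{v\}\otimes\mathbf{1}_{\No}^{\perp}$ with $v=v_{max}$ or $v=v_{min}$. Assuming the graph is connected, Perron--Frobenius makes $\lambda_{max}$ simple and $v_{max}$ strictly positive, yielding the agreement reading: each agent's state is a positive multiple $(v_{max})_i$ of a common configuration $v_{ax}\in\mathbf{1}_{\No}^{\perp}$. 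Orthogonality of eigenvectors of the symmetric $\tilde A$ forces $v_{min}\perp v_{max}>0$, hence $v_{min}$ has mixed signs and the branch is of disagreement type.

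Finally, to upgrade this linear picture to an actual bifurcation whose branches are tangent to the one-dimensional directions $\R\{v\otimes v_{ax}\}$, I would perform a Lyapunov--Schmidt / center-manifold reduction onto the $(\No-1)$-dimensional kernel and read off the reduced bifurcation equation. I expect this to be the main obstacle: the kernel has dimension $\No-1>1$, so the reduction is degenerate and the selection of which one-dimensional subspaces carry genuine branches is governed by the $S_{\No}$ permutation symmetry of the options together with the leading nonlinear coefficients, which are nonzero precisely because $S_q''(0)\neq0$ (giving a transcritical rather than a pitchfork normal form). I would invoke equivariant bifurcation theory to enumerate the symmetry-determined branch directions $v_{ax}$, following the details of \cite[Theorem 1]{bizyaeva2020patterns}; the agreement/disagreement classification then follows directly from the sign structure of $v_{max}$ and $v_{min}$ established above.
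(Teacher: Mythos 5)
Your proposal is correct and takes essentially the same route as the paper: linearize at the neutral state, block-diagonalize the Jacobian to obtain the critical attention values and the $(\No-1)$-dimensional kernels (this is exactly the computation the paper outsources to the cited prior works), then invoke the $S_{\No}$-equivariance of the model and the Equivariant Branching Lemma to get the one-dimensional axial directions $v_{ax}$, with Perron--Frobenius and eigenvector orthogonality giving the agreement/disagreement sign structure of $v_{max}$ and $v_{min}$. The only minor imprecision is attributing the branch type solely to $S_q''(0)\neq 0$: for $\No=2$ the option-swap symmetry kills the quadratic equivariant and the branches are pitchforks governed by $S_q'''(0)$, but this side remark does not affect the argument.
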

	
\begin{proof} Equations~\eqref{eq:agrement threshold} and~\eqref{eq:disagrement threshold} were derived in \cite{bizyaeva2020general}[Theorem IV.1] and \cite{bizyaeva2020patterns}[Theorem 1]. The form of the one-dimensional subspaces along which agreement and disagreement bifurcations occur, in particular, the vectors $v_{ax}$, can be constructed using equivariant bifurcation theory methods~\citep{Golubitsky1985-2,Golubitsky2002book}. Because the intra-agent opinion network is all-to-all and homogeneous, model~\eqref{EQ:Specialized Dynamics} is symmetric with respect to arbitrary option permutations, i.e., it is $\ES_\No$-equivariant, where $\ES_\No$ is the symmetric group of order $\No$. It follows from the Equivariant Branching Lemma~\citep[Therem XIII.3.3]{Golubitsky1985-2} that, generically, $v_{ax}$ is the generator of the fixed point subspace of one of the axial subgroups of the action of $\ES_\No$ on  $\mathbf{1}_{\No}^{\perp}$. There is, of course, only one (modulo multiplication by a non-zero scalar) such vector for $\No=2$, i.e., $v_{ax}=(1,-1)$. Modulo multiplication by a non-zero scalar, there are three such vectors for $\No=3$, i.e., $v_{ax}=(1,1,-2)$, $v_{ax}=(1,-2,1)$, $v_{ax}=(-2,1,1)$. The form of $v_{ax}$ for $N_o>3$ can be similarly derived.
\end{proof}

Although derived for the simplified homogeneous dynamics~\eqref{EQ:Specialized Dynamics}, Theorem~\ref{THM: dis-agree dicothomy} is predictive of the model behavior in the presence of small heterogeneities in the coupling parameter, i.e., when the homogeneous regime is weakly violated. Also, the same bifurcation techniques used to prove this theorem generalize to the case in which large heterogeneities are present. See, e.g., \cite[Section~III.C]{bizyaeva2020general} for a general clustering result in the presence of large difference in the coupling parameter between two or more agent subpopulations.

	\begin{remark}\label{RMK: swap sig and sum}
		In~\eqref{EQ:Specialized Dynamics}, we can bring the sums appearing inside $S_1$ and $S_2$ outside of the sigmoids without changing the qualitative behavior of the model. In particular, because the model Jacobian does not change if sums and sigmoids are swapped, Theorem~\ref{THM: dis-agree dicothomy} remains true.
	\end{remark}
}

\begin{figure}
	\centering
	\includegraphics[width=0.95\textwidth]{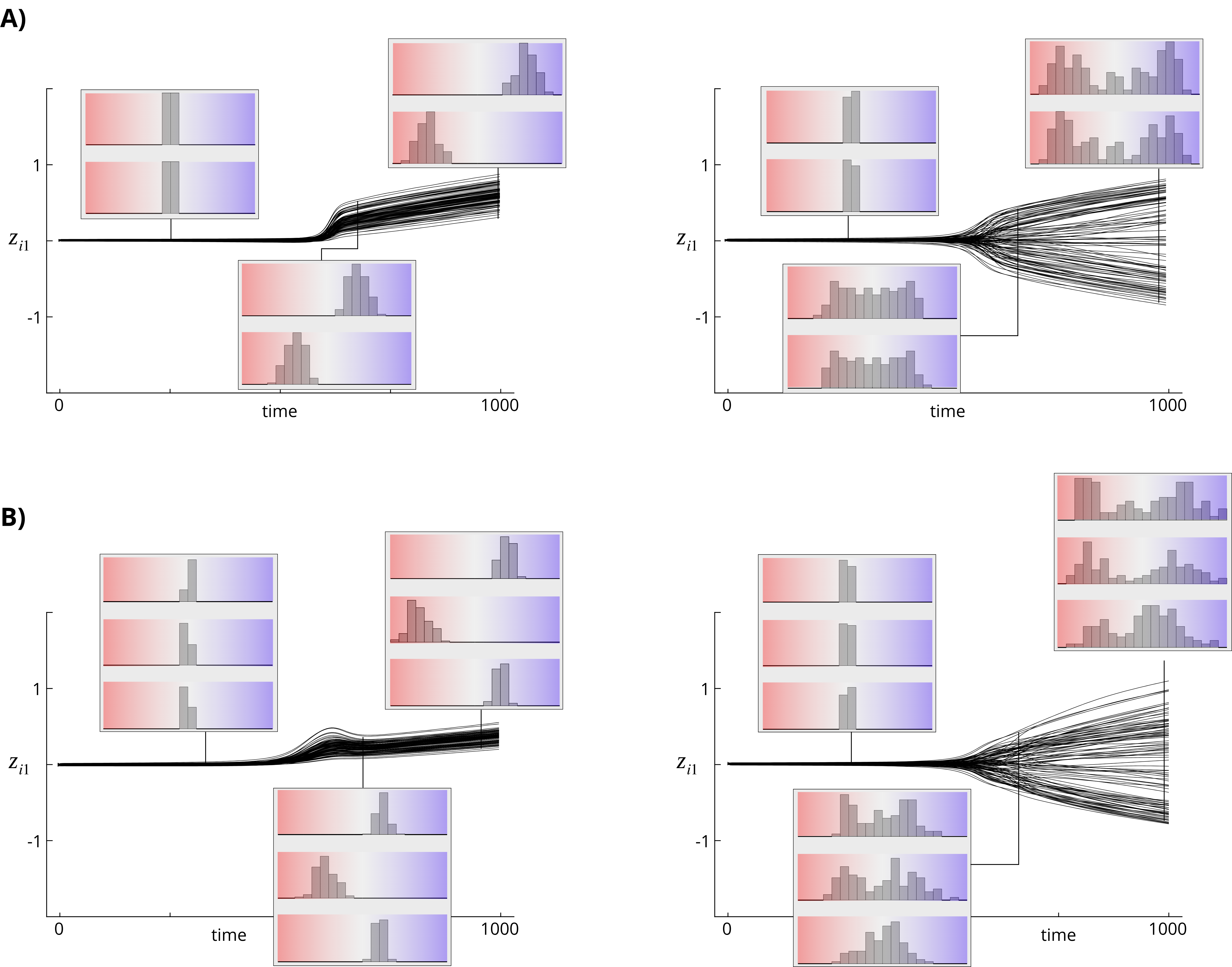}
	\caption{Agreement (left) and disagreement (right) opinion formation in Watts-Strogatz networks with $\Na=100$ agent nodes and (A) $\No=2$ and (B) $\No=3$  options. The Watts-Strogatz rewiring probability is $0.9$. \af{Other parameters: $\alpha=0.2$, $\beta=-0.5$, $\gamma=-\delta=0.1$ (agreement), $\gamma=-\delta=-0.1$ (disagreement). Initial conditions and inputs are drawn from a normal distribution with zero mean and variance $10^{-3}$. In all the simulations, $S_1(x)=\tanh(x+0.5\tanh(x^2))$ and $S_2(x)=0.5\tanh(2x+\tanh(x^2))$.}}
	\label{fig:dis-agreement WS}
\end{figure}

In Figures~\ref{fig:dis-agreement WS} and~\ref{fig:dis-agreement BA}, we illustrate the prediction of Theorem~\ref{THM: dis-agree dicothomy} for two types of (undirected) complex networks: a Watts-Strogatz  ``small-world'' network~\citep{watts1998collective} and a Barabasi-Albert ``scale-free'' network~\citep{albert2002statistical}, respectively. Panel A of both figures corresponds to $\No=2$ and panel B to $\No=3$. In the left plots of each panel, agents are cooperative, whereas in the right plots, agents are competitive. In all the simulations $\Na=100$ and small Gaussian-distributed random perturbations were added to all inputs $b_{ij}$ to verify the robustness of the theory. \af{Theorem~\ref{THM: dis-agree dicothomy} is indeed valid for $b_{ij}=0$ but, by normal hyperbolicity of the predicted bifurcation branches of agreement and disagreement equilibria, its predictions remain valid in the case where the $b_{ij}$ are not exactly zero or other types of perturbations (e.g., coupling weights perturbations that weakly violate the homogeneous regime assumption) are present. Precise robustness bounds can be computed using the methods introduced in~\cite{wang1994robustness}. This observation makes Theorem~\ref{THM: dis-agree dicothomy} relevant for real-world applications.} 

Each plot in Figures~\ref{fig:dis-agreement WS} and~\ref{fig:dis-agreement BA} shows the temporal evolution of $z_{i1}$, i.e.,  agent $i$'s opinion about the first option, as the attention parameter is slowly increased from below to above the agreement and disagreement opinion-formation critical values $u_a$ and $u_d$ defined in~\eqref{eq:agrement threshold} and~\eqref{eq:disagrement threshold}, respectively. The critical value is reached at half of the simulation time. The insets show the distribution of agent opinions about all the options at one third of the simulation time (i.e., before the opinion forming bifurcation), two thirds of the simulation time (i.e., right after the opinion-forming bifurcation), and at the end of the simulation time. The first (second, third) line shows the distribution of the agent opinions about the first (second, third) option. Colors indicate  the strength of favor/disfavor for a given option, with light blue (resp., red) indicating weak favor (resp. disfavor) and dark blue (resp., red) indicating strong favor (resp., disfavor). Asymmetric distributions on the right (left) indicate that agents agree in favoring (disfavoring) the option. A roughly symmetric distribution indicates disagreement about an option, with some agents favoring it and other agents disfavoring it. A sharply bimodal distribution corresponds to {\it polarization} of opinions about the option.

\begin{figure}
	\centering
	\includegraphics[width=0.95\textwidth]{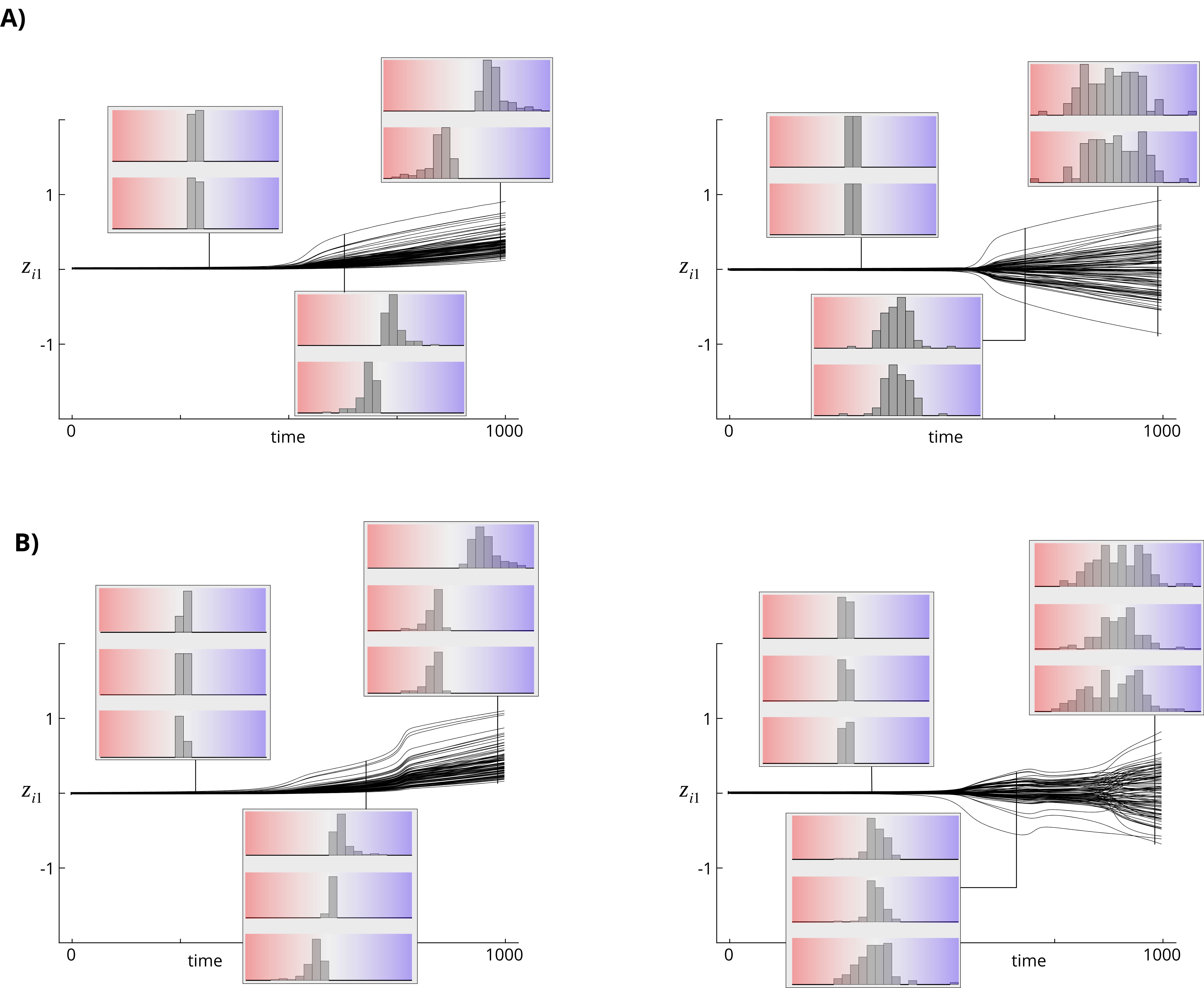}
	\caption{Agreement (left) and disagreement (right) opinion formation in Barabasi-Albert networks with $\Na=100$ agent nodes and (A) $\No=2$ options and (B) $\No=3$ options. The Watts-Strogatz rewiring probability is $0.9$. \af{Other parameters: $\alpha=0.2$, $\beta=-0.5$, $\gamma=-\delta=0.1$ (agreement), $\gamma=-\delta=-0.1$ (disagreement). Initial conditions and inputs are drawn from a normal distribution with zero mean and variance $10^{-3}$.}}
	\label{fig:dis-agreement BA}
\end{figure}

The model behavior follows our theoretical predictions. When agents cooperate, the neutral equilibrium bifurcates at $u=u_a$ into an agreement branch where all the agents share the same qualitative opinion. The larger is $u$ relative to $u_a$, the stronger are the emerging opinions. In Figures~\ref{fig:dis-agreement WS}A and~\ref{fig:dis-agreement BA}A, left, all the agents favor Option 1 and disfavor Option 2. In Figure~\ref{fig:dis-agreement WS}B, left, all agents disfavor Option 2 and are conflicted between Options 1 and 3. In Figure~\ref{fig:dis-agreement BA}B, left, all agents favor Option 1 and are conflicted between Options 2 and 3. Conversely, when agents compete, the bifurcation branches emerging from the neutral equilibrium are of disagreement type and again opinion strength increases as $u$ increases above $u_d$. In Figures~\ref{fig:dis-agreement WS}A and~\ref{fig:dis-agreement BA}A, right, roughly half of the agents favor Option 1, while the other half favor Option 2. In Figures~\ref{fig:dis-agreement WS}B and~\ref{fig:dis-agreement BA}B, right, for each of the three options the group is roughly split into agents that favor the option and agents that disfavor it.

\af{
\begin{remark}
 The precise effect of the inputs $b_{ij}$ on the bifurcation and steady-state behavior can be computed using Lyapunov-Schmidt reduction techniques~\citep{Golubitsky1985}. More precisely, the projection of the input vector onto the bifurcation kernel predicts, to first order, in which direction along the one-dimensional kernel's subspaces, $\mathbb R\{v_{max}\otimes v_{ax}\}$ (agreement case) or $\mathbb R\{v_{min}\otimes v_{ax}\}$ (disagreement case) defined in Theorem~\ref{THM: dis-agree dicothomy},  the agreement or disagreement bifurcation unfolds (see~\cite[Theorem~1]{Gray2018} for the agreement, 2-option case). At bifurcation, the unfolding direction determines which options are favored or disfavored by the inputs (see, e.g., \citep[Figure~2]{Gray2018}). Here we generalize ~\citep[Theorem~1]{Gray2018} to the disagreement and 3-option cases.
\end{remark}
}

\section{\af{Multi-Robot Task Allocation as Disagreement Opinion Formation}}
\label{sec:task allocation}



We illustrate using a detailed example how our opinion dynamics~\eqref{EQ:Specialized Dynamics} can be applied in robot swarm coordination and control, in particular, to a distributed task-allocation problem. Given its dynamical nature and the small number of parameters, the proposed distributed control algorithm can be implemented in analog circuits and is thus a good candidate for low-computational-power robot swarms that require dynamic task allocation. A single-agent motivation dynamics framework was also recently proposed and applied to a task coordination problem for a mobile robot \citep{Reverdy2018,Reverdy2020}. In the spirit of these works, the distributed opinion dynamics of \cite{bizyaeva2020general} allows us to move beyond a single robotic unit and consider distributed task coordination in a robotic swarm.   

Consider that a robot swarm is given $\No$ tasks to complete, in which each task has a distinct priority level and each robot may have its own preference to work on certain tasks. To formalize, we denote by a positive constant $\mu_j \in [0, 1]$, $j=1,\ldots,\No$, satisfying $\sum_{j=1}^\No \mu_j = 1$, the priority of each task~$j$. The main objective in this example is to design a simple distributed decision-making algorithm that assigns robots to each task. The algorithm should be designed to respect task priority, i.e., the larger the task priority $\mu_j$, the larger the number of robots assigned to that task by the algorithm.

Given an undirected communication network between the robots, with adjacency matrix $[\tilde a_{ik}]_{i,k=1}^\Na$, let $\mathcal N_i$ be set of neighbors of robot $i$. We adapt opinion dynamics~\eqref{EQ:Specialized Dynamics} to the task-allocation problem as follows. The preference that robot $i$ gives to performing task $j$ evolves accordingly to vector field
\begin{subequations} \label{eq:multirobot_task_allocation}
	\begin{align}
	\dot z_{ij} &= F_{ij}(\mathbf Z) - \frac{1}{N_o} \sum_{l=1}^{N_o} F_{il}(\mathbf Z) \\
	F_{ij}(\mathbf Z) &= -z_{ij}  + u \, \left(\mu_j \, (|\mathcal N_i|+\nu_{ij}) - \frac{1}{2} \sum_{k \in \mathcal N_i} S \left( 2\tilde\gamma z_{kj} \right) \right). \label{eq:multirobot_task_allocation_b}
	\end{align}
\end{subequations}
The positive parameter $\nu_{ij}$ is the intrinsic zealousness of robot $i$ to perform task $j$. As we will see, the zealousness parameter can be used as control input to the robot swarm to trigger the distributed task allocation. Opinion dynamics~\eqref{EQ:Specialized Dynamics} specialize to model~\eqref{eq:multirobot_task_allocation} by letting $b_{ij}=u\mu_j(|\mathcal N_i|+\nu_{ij})$, $\alpha=\beta=\delta=0$, $\tilde\gamma=-\gamma>0$, and by swapping the inter-agent interaction sum with the sigmoid $S_1$ (as discussed in Remark~\ref{RMK: swap sig and sum}, this choice does not affect the qualitative behavior of the model).

The rationale behind the derivation of model~\eqref{eq:multirobot_task_allocation} is the following. Let $N_i^j$ be the number of robots in $\mathcal N_i$ that express preference for task $j$, i.e., such that $z_{kj}>0$.  Because generically $z_{kj}\neq 0$ and if the upper and lower  limits of $S$ are $\pm1$, respectively, for sufficiently large $\gamma$ the term $\frac{1}{2} \sum_{k \in \mathcal N_i} S \left( 2\gamma z_{kj} \right)\approx \frac{1}{2} \left( N_i^j - \left( |\mathcal N_i| - N_i^j \right) \right) =  N_i^j - \frac{|\mathcal N_i|}{2}$. Imposing the simplex condition by subtracting the average opinion drift as in~(\ref{eq:multirobot_task_allocation}a), the constant $\frac{|\mathcal N_i|}{2}$ disappears from the opinion dynamics on the simplex and, for $\nu_{ij}=0$ the effective interaction term reduces to
\begin{align}
\mu_j \, |\mathcal N_i| - \frac{1}{2}\sum_{k \in \mathcal N_i} S \left( \epsilon z_{kj} \right)  \approx \mu_j \, |\mathcal N_i| - N_i^j\,.
\end{align}
In other words, in our task-allocation model, each robot updates its preference for task $j$ roughly proportionally to the priority of task $j$, scaled by the number of its neighbors, minus the number of its neighbors that already show a preference for the same task. A non-zero zealousness $\nu_{ij}$ increases the tendency of robot $i$ to choose task $j$.

Observe that the condition $-\tilde\gamma=\gamma<0$ and $\delta=0$ imposes that the model is in the disagreement regime for $u>u_d>0$. Thus, in our model, the robots naturally develop different opinions about the tasks, i.e., they self-allocate across the task in a distributed fashion. Furthermore, robot task-allocation will reflect task priorities, which, from a bifurcation viewpoint, breaks the symmetry between the tasks.

\begin{figure}
	\centering
	\includegraphics[width=0.75\textwidth]{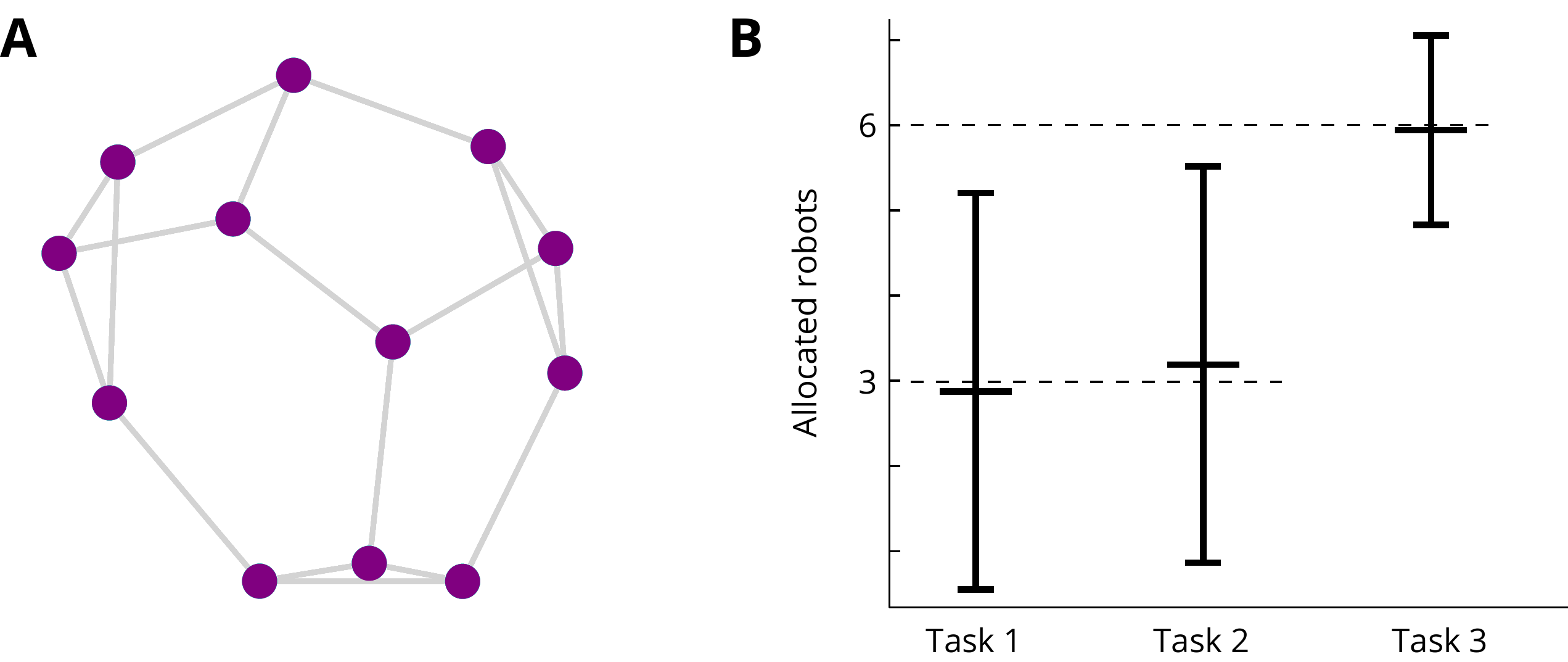}
	\caption{A. The Frucht Graph. B. Average (large bar) and standard deviation (short bar) of the number of agents allocated to Tasks~1,2,3 over one-hundred instances of our task-allocation algorithm. Details in the main text.}
	\label{fig:task allocation}
\end{figure}

We illustrate the performance of our task-allocation algorithm for a swarm of $\Na=12$ robots allocating themselves across $\No=3$ tasks. We assume that the inter-agent communication graph is the Frucht graph~\citep{frucht1939} (see Figure~\ref{fig:task allocation}A), which is 3-regular ($|\mathcal N_i|=3$ for all $i=1,\ldots,\Na$) but possesses no non-trivial symmetries, i.e., all robots can be distinguished by their position in the graph. Our algorithm is particularly suited in {\it a priori} deadlocked situations, i.e., those in which two or more tasks have the same objective priority. In our example, we set $\mu_1=\mu_2=0.3$ and $\mu_3=0.4$. We also set $\tilde\gamma=1.0$ and $u=2    u_d$. For the sigmoid function we use $S(x)=\tanh(x+0.5\tanh(x^2))$, where the $\tanh(x^2)$ ensure genericity (i.e., non-vanishing even partial derivatives) of the $S$ at the origin. We run one hundred repetitions of our example with small random zealousness parameters $\nu_{ij}$ (modeling parametric uncertainties) and small random initial conditions, and computed the resulting average and standard deviation of the number of robots allocated to each option. 

Robot $i$ is considered allocated to task $j$ if at the end of the simulation $\|\Zz_i\|\geq\theta$ and $z_{ij}>z_{il}$ for all $l\neq j$, where $\theta$ is a control threshold parameter that determines whether a robot became opinionated (in the simulations, $\theta=0.1$). The final simulation time can be considered as a tunable decision deadline for the task-allocation decision. Figure~\ref{fig:task allocation}B shows the results of our analysis. On average, a larger number of agents is allocated to the highest-priority task. A roughly equal number of agents is allocated on average to the less urgent tasks. The standard deviation of the allocation number is also smaller for the highest-priority task, which indicates that our algorithm consistently allocates the largest number of robots to the highest-priority task. Of course, there is a lot of room for improving the simple algorithm proposed in this section, in particular, in order to drastically reduce variability  in the number of agents allocated to the various tasks, and thus increase robustness to initial conditions and parameter uncertainties. In some instances of our example, depending on initial conditions or the intrinsic zealousness parameters, no robots were allocated to either Task~1 or Task~2, which is clearly an undesirable situation in practice. We suggest a strategy to overcome these limitations in Section~\ref{sec:task allocation cascades}.

\section{Opinion cascades with tunable sensitivity from feedback attention dynamics}

\label{sec: opinion cascades}

In this section we couple the opinion dynamics~\eqref{EQ:Specialized Dynamics} with the attention dynamics introduced in~ \cite{bizyaeva2020general}. Using a detailed geometrical analysis for the case $\No=3$ (the case $\No=2$ is studied in~\cite{bizyaeva2020general}), we show that the resulting positive feedback between attention and opinion strength allows an agent to transition from being in a stable unopinionated state to a stable opinionated state in response to an input $\mathbf{b}_{i} = (b_{i1}, \ldots, b_{iN_o})$ sufficiently large in magnitude. The associated activation threshold is implicitly defined by the geometric properties of the dynamics and is explicitly tunable with the model parameters. The threshold effect mediated by the positive feedback between attention and opinion strength scales up to networks of interconnected agents under the influence of inputs on a small number of ``seed nodes". At the network level, the coupling between opinion and attention dynamics creates a threshold for the input magnitude below which the network remains close to neutral and above which an opinion cascade is triggered.

\subsection{Feedback attention dynamics and tunable sensitivity of opinion formation at the single agent level}\label{sec:attention dynamics}


To understand the effect of coupled opinion and attention dynamics we first consider an agent without any neighbors. Even without an inter-agent communication network, the dynamics of each agent's opinions are organized by a bifurcation at which the agent develops preferences for the available options. When $\tilde a_{ik} = 0$ for all $i,k = 1, \dots, N_{a}$, the $F_{ij}(\Zz)$ function in  \eqref{eq:specialized b} becomes 
\begin{equation}
F_{ij}(\Zz) = -d z_{ij} + 
u_{i}\left(  S_{1}\left(\alpha z_{ij} \right)+ \sum_{\substack{l\neq j\\ l=1}}^\No S_{2}\left( \beta z_{il} \right)\right)+b_{ij}. \label{eq:F_decoupled}
\end{equation}
When $b_{ij} = 0$, for all $j = 1, \dots, N_{o}$, and $u_{i} < \frac{d}{\alpha - \beta}$, the neutral opinion state $\Zz_{i} = \mathbf{0}$ of agent $i$ is stable. At $u_{i}^{*} = \frac{d}{\alpha - \beta}$ branches of opinionated solutions emerge in an $S_{N_{o}}$-equivariant bifurcation, which is studied in detail in \cite{elmhirst2004s}. We omit the formal details of this bifurcation analysis here, and simply note that for $u_{i} < u_{i}^{*}$ the agent's opinion converges to the neutral equilibrium with $\| \Zz_{i}^{*} \| = 0$, whereas for $u_{i} > u_{i}^{*}$ it converges to some nonzero magnitude equilibrium $\| \Zz_{i}^{*} \| > 0$, whose norm grows monotonically with $u_i$. The stable opinionated equilibria that appear as primary bifurcation branches correspond either to agent $i$ choosing one option and rejecting all others, or to it being conflicted between a subset of the available options (Figure~\ref{fig:SensitivityHysteresis}A, left).

The dynamics of an agent $i$'s opinion with nonzero input $\mathbf{b}_{i}$ directly follow from the bifurcation behavior of its opinion dynamics with $\mathbf{b}_{i}=\mathbf{0}$. As discussed in Section \ref{SEC: general model}, when $u_{i}$ is small, the input  $\mathbf{b}_{i}$ and resistance $d$ define the linear opinion formation behavior of agent $i$. For large $u_{i}$ the nonlinear behavior dominates and agent $i$'s opinion can become much larger in magnitude than its input $\mathbf{b}_{i}$. Generically, when $\mathbf{b}_{i} \neq 0$, for values of $u_{i}$ near $u_{i}^{*}$, a single branch of stable opinionated solutions is selected accordingly to ${\mathbf b}_i$, while all other solutions are removed (Figure~\ref{fig:SensitivityHysteresis}A, right). 

How the opinion bifurcation diagram changes in response to inputs motivates the design of an attention feedback dynamic that increases an agent's attention from below to above its bifurcation point when sufficiently large inputs are present. We define the closed-loop attention dynamics as
\begin{equation}
\tau_{u} \Dot{u}_{i} = - u + u_{min} + (u_{max}-u_{min}) S_{u} (\| \Zz_{i} \|)  \label{eq:udot}
\end{equation}
where $\tau_{u}$ is an integration timescale and $S_{u}:\R_{\geq 0}\to[0\ 1]$ is a smooth saturating function, satisfying $S_{u}(0) = 0$, $S_{u}(y) \to 1$ as $y \to \infty$, $S_{u}'(y) > 0$ for all $y \in \R_{\geq 0}$, $S''_{u}(y) > 0$ for all $0\leq y < u_{th}$ and $S''_{u}(y) < 0$ for all $y > u_{th}$, with $u_{th} >0$. For the sake of numerical illustrations we use $S_u(y)=\frac{y^n}{u_{th}^n+y^n}$. The larger is $n$ the steeper is the sigmoid around its threshold $u_{th}$. We assume $u_{min} < u_{i}^{*} < u_{max}$ and $n$  sufficiently large so that the saturation is sufficiently steep.


\begin{figure}
	\centering
	\includegraphics[width=0.7\textwidth]{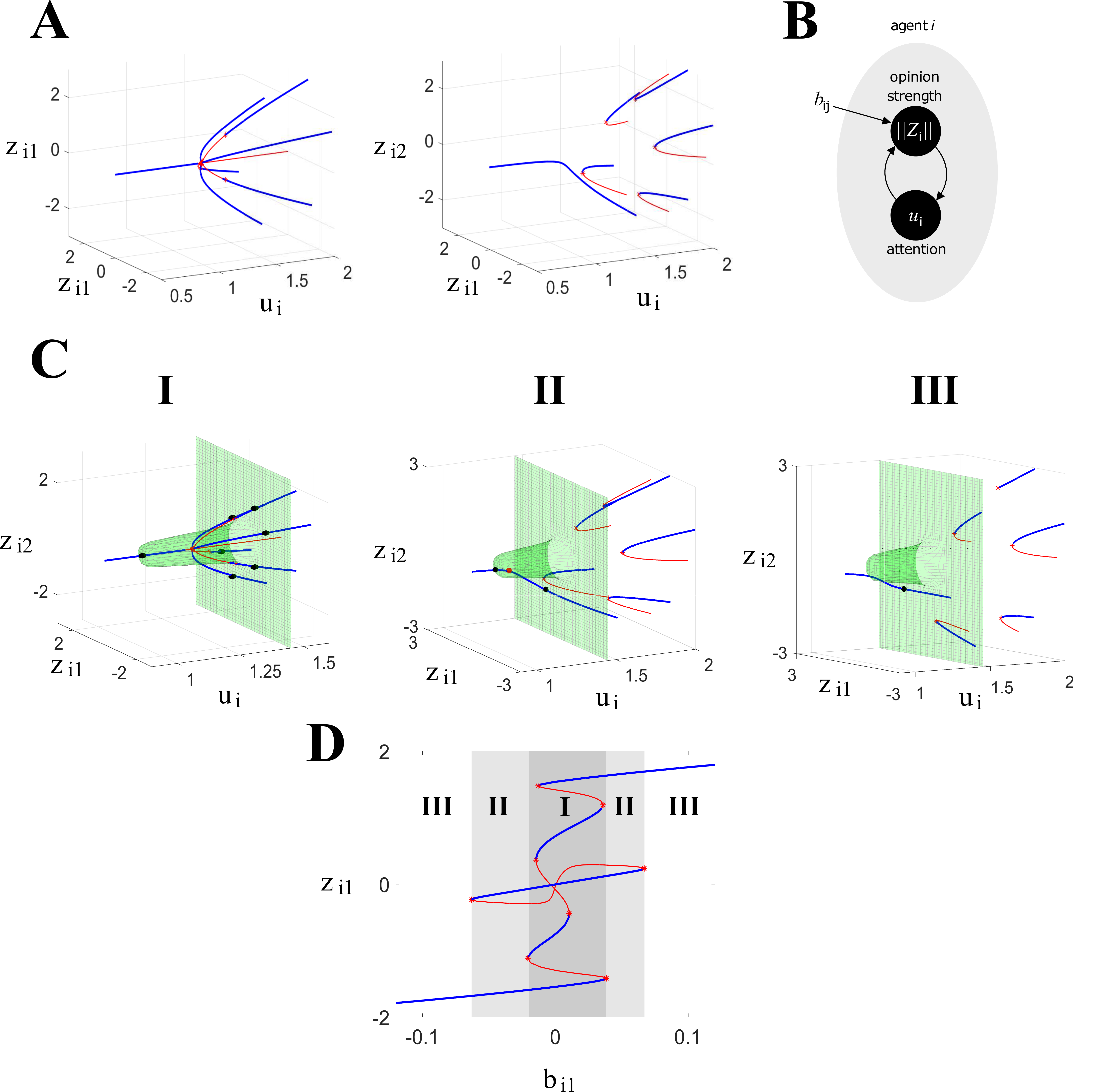}
	\caption{A. Bifurcation diagram of the single agent dynamics with respect to the attention parameter. Left: ${\mathbf b}_i=0$. Right: ${\mathbf b}_i\neq 0$. B. Schematic of the relationship between an agent's opinion strength, attention, and input.  C. Representative intersections of the attention nullclines (green surface) and opinion nullclines (red and blue lines) for the three parameter regimes with $N_{o} = 3$. D. Bifurcation diagram of the coupled system \eqref{eq:F_decoupled},\eqref{eq:udot} with $N_{o} = 3$, projected to the $b_{i1}$, $z_{i1}$ plane. In D blue (red) lines correspond to stable (unstable) equilibria of the coupled system. In A,C blue (red) lines correspond to stable (unstable) equilibria of the opinion dynamics \eqref{eq:F_decoupled} with a static attention parameter. Additional branches of unstable equilibria are not shown in A,C-I for diagram clarity. Black (red) spheres mark the stable (unstable) equilibria of the coupled system.  Parameters: $d = 1$, $\alpha  = 0.4$, $\beta = -0.4$, $u_{min} = 1.05$, $u_{max} = 1.45$, $n = 3$, $u_{th} = 0.15$. B: $b_{i2} = 0.005, b_{i3} = -0.002$, $b_{i1}$ varied as illustratd. A (left) and CI: $\mathbf{b}_i = 0$; A (right) and CII : $\mathbf{b}_{i} = ( 0.05,-0.01,0.02 )$; CIII: $\mathbf{b}_{i} = ( 0.1,-0.01,0.02 )$.  Diagrams in A,C,D are generated with help of MatCont numerical continuation package \citep{matcont}.}
	\label{fig:SensitivityHysteresis}
\end{figure}

The closed-loop attention dynamics~\eqref{eq:udot} create a localized positive feedback between an agent's attention and its opinion strength, as illustrated in Fig. \ref{fig:SensitivityHysteresis}B. We use the chain rule to compute partial derivatives at $\Zz_i = 0$ of the time-derivative of $z_{ij}^2$ with respect to attention $u_{i}$ of an agent whose dynamics are defined by \eqref{eq:udot} and, conversely, the time-derivative of $u_i$ \eqref{eq:udot} with respect to $z_{ij}^2$. The sign of the product of these two partial derivatives gives the sign of the feedback between attention and opinion strength at the neutral equilibrium. Computing, we get
\begin{align*}
\frac{\partial\dot{(z_{ij}^2)}}{\partial u_{i}}&= 2 z_{ij} \left( \frac{N_o - 1}{N_o}\Big(S_1(\alpha z_{ij}) -  S_2(\beta z_{ij}) \Big) - \frac{1}{N_o} \sum_{\substack{l \neq j \\ l = 1}}^{N_o} \Big(S_1(\alpha z_{il}) -  S_2(\beta z_{il}) \Big) \right),\\
\frac{\partial \dot{u}_{i}}{\partial(z_{ij}^2)} &= \frac{1}{\tau_u}(u_{max} - u_{min}) S_u'(\| \Zz \|).
\end{align*}
Near the neutral equilibrium $\Zz_i = 0$ we can approximate $S_1(\alpha z_{il}) -  S_2(\beta z_{il}) \approx (\alpha - \beta) z_{il}$, and using the simplex constraint on the opinions we find that for small $\Zz_i$ and any $j = 1, \dots, N_{o}$,  $\frac{\partial\dot{(z_{ij}^2)}}{\partial u_{i}} \approx 2 (\alpha - \beta) z_{ij}^{2}$. Therefore
\begin{equation}
\frac{\partial\dot{(z_{ij}^2)}}{\partial u_{i}}\frac{\partial \dot{u}_{i}}{\partial(z_{ij}^2)}  \approx \frac{2}{\tau_u}(\alpha - \beta) (u_{max} - u_{min})  z_{ij}^{2} S_u'(\| \Zz \|). \label{eq:uz_loop}
\end{equation}
The right-hand side of \eqref{eq:uz_loop} is positive whenever $z_{ij} \neq 0$. Thus, a positive feedback loop between an agent's opinion strength and its attention  is triggered as soon as its opinion deviates from zero.

The localized positive feedback between attention and opinion strength creates multistability and threshold behavior in 
the closed-loop opinion dynamics. Figures~\ref{fig:SensitivityHysteresis}C and~\ref{fig:SensitivityHysteresis}D illustrate the three-dimensional phase portrait and the steady-state behavior (bifurcation diagram) of the closed-loop opinion-attention dynamics for $\No=3$, respectively. The bifurcation diagram in  Fig.~\ref{fig:SensitivityHysteresis}D is projected onto the $z_{i1}$,$b_{i1}$ plane.  Note that the bifurcation diagram characterizes the input-output relationship, where the input  is ${\mathbf b}_i$ and the output is the steady-state opinion. 

We can distinguish three regimes (labels as in Fig.~\ref{fig:SensitivityHysteresis}D):
\begin{enumerate}[label=\Roman*.]
	\item \textbf{Multistability:} $\|\mathbf{b}_{i}^{\perp}\| < \varepsilon_1$. For small  $\|\mathbf{b}_{i}^{\perp}\|$, the opinion and attention nullclines intersect in many locations and a number of these intersections are stable equilibria. One of the intersections is near to $u_{i} = u_{min}$ and $\Zz_i = \mathbf{0}$ and is stable. The other stable equilibria correspond to intersections of opinionated solution branches with attention nullcline near $u_i = u_{max}$.  For small initial conditions,  agent $i$'s opinion and attention remain small, attracted by the stable equilibrium near to $u_{i} = u_{min}$. For large initial conditions they  approach one of the opinionated equilibria. In this parameter regime the initial conditions determine whether agent $i$ becomes opinionated, and also which opinion it forms. 
	\item \textbf{Bistability:} $\varepsilon_1 < \|\mathbf{b}_{i}^{\perp}\|  < \varepsilon_2$.  For intermediate  $\|\mathbf{b}_{i}^{\perp}\|$, the coupled dynamics admit only  two stable equilibria: one is the unopinionated equilibrium near $u_i = u_{min}$ and the other is an opinionated equilibrium on the solution branch favored by the input $\mathbf{b}_i$. For small initial conditions, agent $i$'s opinion and attention remain small, wherease for large initial conditions they approach the stable equilibrium picked out by the input. In this regime, initial conditions determine whether or not an agent becomes opinionated, but the input governs the opinion formation outcome. 
	\item \textbf{Cascade:} $\| \mathbf{b}_{i}^{\perp} \| > \varepsilon_2$.  For sufficiently large  $\|\mathbf{b}_{i}^{\perp}\|$, the coupled system has a unique stable equilibrium point which corresponds to an opinionated state favoring the opinion picked out by the input $\mathbf{b}_i$. From any initial condition, an agent will reliably become opinionated in the direction informed by the input. 
\end{enumerate}
The boundaries $\varepsilon_1$, $\varepsilon_2$ between these three regimes can be tuned by the parameters of the saturation function $S_{u}$ in \eqref{eq:udot}. In other words, the sensitivity of the closed-loop dynamics to forming opinions in response to inputs is tunable.

\subsection{Network opinion cascades with tunable sensitivity}
\label{ssec:cascade tuneable}

In the presence of symmetric network interactions, the intra-agent positive feedback loop between attention and opinion strength scales up to the network level, as illustrated in Figure~\ref{fig:attention loop net}. Independent of their excitatory or inhibitory nature, network interactions create a positive feedback loop between the opinion strengths of any pair of interconnected agents, at least close to the neutral equilibrium where the interaction nonlinearities are steeper and thus the opinion interaction is stronger. To show this statement we use the chain rule to compute partial derivatives at the neutral equilibrium of the time-derivative of $z_{ij}^2$ with respect to $z_{kl}^2$ for two agents $i$ and $k$ interacting according to~\eqref{EQ:Specialized Dynamics} and for symmetric network coupling. Computing, we get
\begin{equation*}
\left.\frac{\partial\dot{(z_{ij}^2)}}{\partial z_{kl}^2}\right|_{\Zz=0}=\frac{1}{\No}\frac{z_{ij}}{z_{kl}}u_i\tilde a_{ik} (\delta-\gamma),\text{ if }l\neq j,\quad  \left.\frac{\partial\dot{(z_{ij}^2)}}{\partial z_{kj}^2}\right|_{\Zz=0}=\frac{\No-1}{\No}\frac{z_{ij}}{z_{kj}}u_i\tilde a_{ik} (\gamma-\delta)
\end{equation*}
Thus, if $\tilde a_{ik}=1$, by symmetry,
\begin{equation}
\left.\frac{\partial\dot{(z_{ij}^2)}}{\partial z_{kl}^2} \frac{\partial\dot{(z_{kl}^2)}}{\partial z_{ij}^2}\right|_{\Zz=0}=\frac{1}{\No^2}u_iu_k(\gamma-\delta)^2,\text{ if }l\neq j,\quad \left.\frac{\partial\dot{(z_{ij}^2)}}{\partial z_{kj}^2} \frac{\partial\dot{(z_{kj}^2)}}{\partial z_{ij}^2}\right|_{\Zz=0}=\left(\frac{\No-1}{\No}\right)^2u_iu_k(\gamma-\delta)^2,
\end{equation}
which yields an element-wise positive feedback loop between the opinion strengths of agent $i$ and agent $k$. 
As a consequence, when equations~\eqref{EQ:generic decision dynamics} and~\eqref{eq:udot} are coupled, the attention variables of agents $i$ and $k$ are also interconnected in an (indirect) positive feedback loop. This can easily be seen by computing the overall sign of the interconnection
loop from $u_i$ to $u_k$ and back, passing through $||\Zz_i||$ and $||\Zz_k||$ (see Fig.~\ref{fig:attention loop net}). The existence of a networked positive feedback between the agents' attention variables and opinion strengths suggests the existence of a {\it network threshold for the trigger of opinion cascades}.

\begin{figure}
	\centering
	\includegraphics[width=0.5\textwidth]{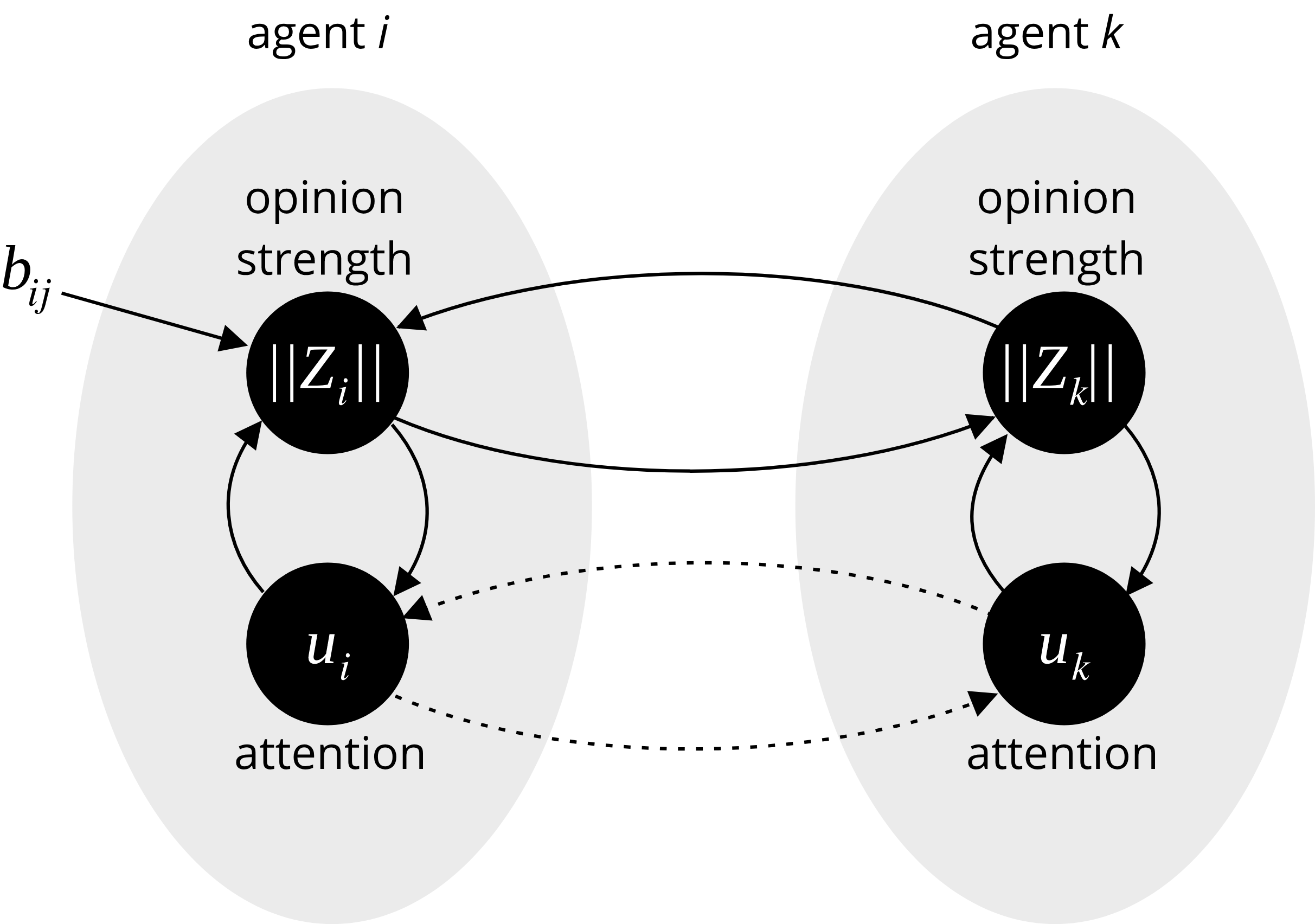}
	\caption{Intra- and inter-agent positive feedback loops between the agents' opinion strengths and their attention.}
	\label{fig:attention loop net}
\end{figure}

An {\it opinion cascade} refers to the situation in which only a small number $N_{seeds}\ll\Na$  of agents, called {\em seeds}, receive non-zero inputs $\bs{b}_1,\ldots,\bs{b}_{N_{seeds}}$ and yet a large part of the network develops a markedly non-zero opinion. 
In the network example of Figure~\ref{fig:attention threshold}A, the cascade seeds are indicated with blue \af{and red arrows depending on the options favored or disfavored by the input (see the figure caption for details)}. When the inputs to the seeds are small, only the seeds develop clear opinions \af{aligned with their respective inputs}, while the other agents remain weakly opinionated or unopinionated in such a way that the network average opinion strength remains small (Figure~\ref{fig:attention threshold}B1). When the inputs to the seeds are sufficiently strong, the seeds develop opinions that are sufficiently strong to engage the positive feedback of the network attention, which spreads to some or all of the rest of the agents with some variable delay. 
This yields an opinion cascade such that the network average opinion strength grows large (Figure~\ref{fig:attention threshold}B2). 
Whether the cascade is towards  agreement or disagreement depends on the cooperative (left plots) or competitive (right plots) nature of the agents. Observe that in both agreement and disagreement cases, a number of agents have afterthoughts: they develop an early opinion about Option~1 but either exhibit the opposite opinion or become neutral by the end of the cascade process.

\begin{figure}
	\centering
	\includegraphics[width=\textwidth]{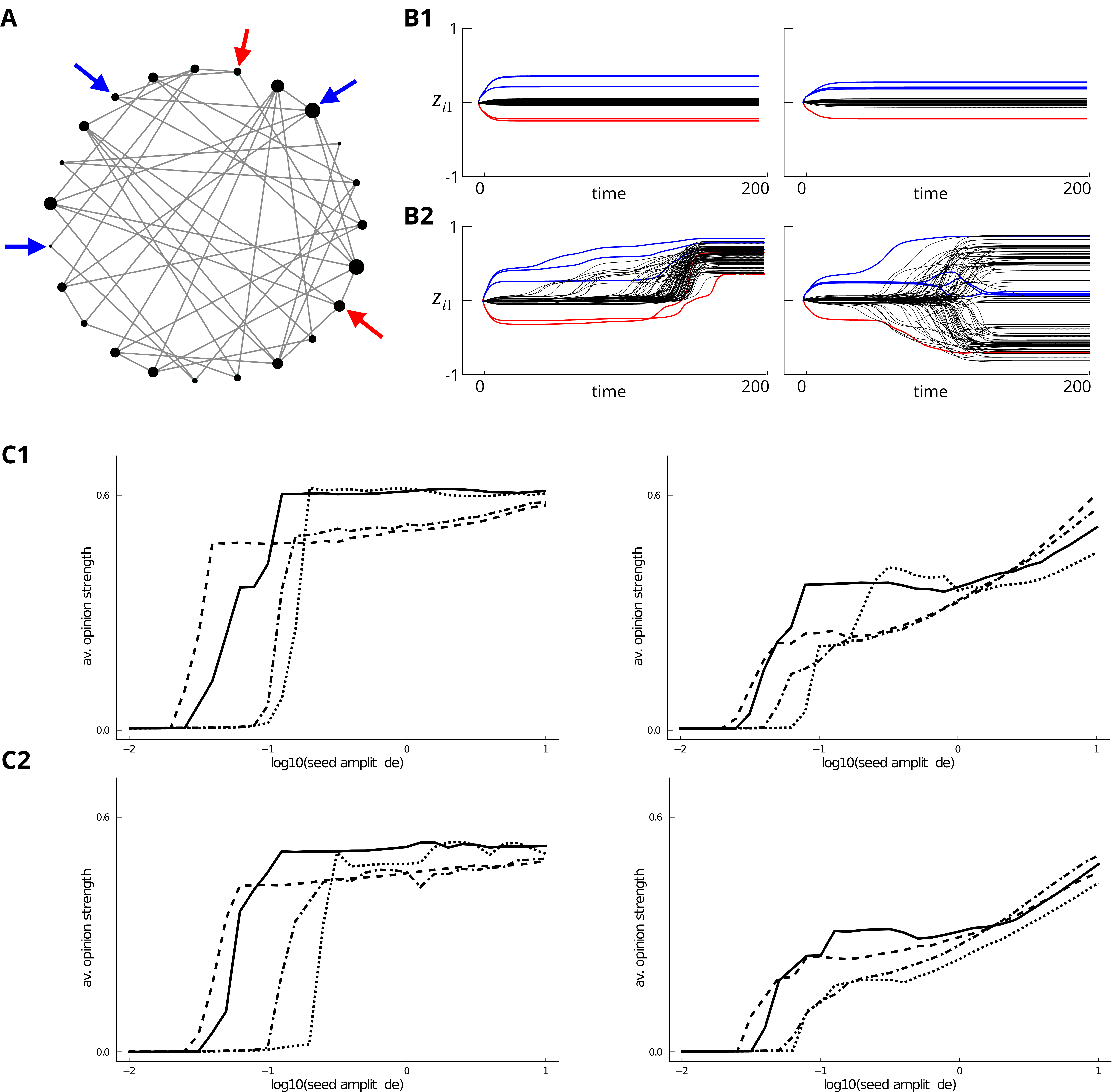}
	\caption{Agreement and disagreement opinion cascades. A. Schematic of the network structure. In a large network, only a small number of agents receive inputs, indicated by arrows. Inputs have the same strength, i.e., the same norm, but they can differ in the information they bring,  i.e., which option they favor. In the numerical simulations, inputs are in favor (blue arrows) or in disfavor (red arrows) of Option~1. B. Network response to inputs for cooperative (left) and competitive (right) agents. B1. Small inputs (no cascade). B2. Large inputs (cascade). C. Cascade threshold analysis. The plots show how the average network opinion strength changes as a function of the seed input amplitude (in logarithmic scale) over ten network instances with random seed placement. $\No=2$ in C1 and $\No=3$ in C2. Left plots: Watts-Strogatz network. Right plots: Barabasi-Albert network. Full lines: cooperative agents, $u_{th}=0.05$. Dashed line: competitive agents, $u_{th}=0.05$. Dotted lines: cooperative agents, $u_{th}=0.1$. Dashed-dotted lines: competitive agents, $u_{th}=0.1$. Other parameters as in Figures~\ref{fig:dis-agreement WS} (Watts-Strogatz case) and~\ref{fig:dis-agreement BA} (Barabasi-Albert case). In all simulations, $\delta_u=0.2$.}\label{fig:attention threshold}
\end{figure}

In the following, we fix the number $N_{seeds}$ and let  all inputs $\bs{b}_1,\ldots,\bs{b}_{N_{seeds}}$ have the same norm $\bar b$ but still allow 
each to bring information either in favor or in disfavor of Option~1.  In the attention dynamics~\eqref{eq:udot}, we use $u_{min}=u_a-\delta_u$ and $u_{max}=u_a+\delta_u$, where $u_{a}$ is defined in~\eqref{eq:agrement threshold}, in the agreement (cooperative agents) regime, and $u_{min}=u_d-\delta_u$ and $u_{max}=u_d+\delta_u$, where $u_{d}$ is defined in~\eqref{eq:disagrement threshold}, in the disagreement (competitive agents) regime. This choice makes the opinion cascade behavior independent of the network-specific opinion-formation bifurcation point. The network topology, seed placement, seed input amplitude, and the attention threshold $y_m$ are thus the only determinants of the observed opinion cascade behavior.

To verify the existence of a network threshold for the trigger of opinion cascades, in Figure~\ref{fig:attention threshold} we systematically vary network topology, attention threshold, and seed input amplitude, for randomly placed seeds. Figures~\ref{fig:attention threshold}C1 and~~\ref{fig:attention threshold}C2 correspond to the case $\No=2$ and $\No=3$, respectively. All plots show the steady-state average network opinion strength $\frac{1}{\Na}\sum_{i=1}^\Na||\Zz_i^*||$ as a function of the seed input amplitude in the coupled~\eqref{EQ:Specialized Dynamics}-\eqref{eq:udot} system. Right plots are the average results for ten instances of a Watts-Strogatz network with randomly placed seeds and left plots are the average results for ten instances of a Barabasi-Albert network with randomly placed seeds ($N_{seeds}=5$ for both types of network). For each network and each value of $\No$, we consider four different regimes: cooperative agents, $y_m=0.05$ (full lines); competitive agents, $y_m=0.05$ (dashed lines); cooperative agents, $y_m=0.1$ (dotted lines); competitive agents, $y_m=0.1$ (dashed-dotted lines). In all cases one can detect a threshold value above which distinctively non-zero average opinion emerge, indicating a cascade as in Figures~\ref{fig:attention threshold}B1 and~B2.

\af{The cascade threshold increases monotonically with $u_{th}$. The parameter $u_{th}$ tunes cascade sensitivity at the single agent level, in the sense that smaller (larger) values of $u_{th}$ decrease (increase) the minimum input strength necessary to trigger a cascade at the single agent level~\citep[Theorem~VI.6]{bizyaeva2020general}. In the presence of networked interaction, the tunable cascade sensitivity of the single agent level scales up to the network level: for large $u_{th}$, large seed input strengths are necessary to trigger a cascade; for small $u_{th}$, small seed input strengths are sufficient to trigger a cascade.}

\af{Observe that in our model cascades can be partial or complete. For instance, the agreement cascade is complete in Figure~\ref{fig:attention threshold}B2 left because all agents develop a markedly non-zero opinion. Conversely, the disagreement cascade is only partial in Figure~\ref{fig:attention threshold}B2 right because some of the agents, including one of the seeds, possess a roughly neutral opinion at the post-cascade steady state. We further observe that, generically, when the bifurcation parameter is the seeds' input strength, as considered here, cascades happen through a simple saddle-node bifurcation. It follows that for supra-threshold inputs the time to reach the post-cascade steady state is inversely proportional to the input strength, a phenomenon known in the dynamical system literature as ``traversing the ghost of the saddle-node bifurcation"~\citep[Section~3.3.5]{Izhikevich2007}. A detailed analysis of both the transition from partial to complete cascades and the time to the post-cascade steady state is left for future developments}.

\section{Control of agreement and disagreement opinion cascade by input centrality assignment}
\label{sec: cascade control}

This section explores how the sensitivity of an opinion-forming network to trigger an agreement or disagreement cascade in response to a small number of inputs can be controlled by exploiting the emergent properties of the network. We introduce notions of agreement and disagreement centrality. These centrality indices can be used to guide the cascade seed placement towards maximizing or minimizing the network response to inputs, independent of the intrinsic node properties like, e.g., the attention threshold parameter $u_{th}$. In other words, our cascade model can be controlled independently and simultaneously in two ways: intrinsically at the agent level, by tuning the attention dynamic parameters, and extrinsically at the network level, by exploiting the network topology for efficient seed placement.  We consider extrinsic control 
that enhances or diminishes sensitivity to inputs for agreement (disagreement) cascades by selecting as seeds the most or least central agents as defined by the agreement (disagreement) centrality index.  The selected set of agents is optimal in the case centrality is modular; otherwise, it provides only an approximation to the optimal set of agents (see, for example, \citep{ClarkSupermodular2013,Fitch2016}). We save the investigation of optimality for future work.

\subsection{Agreement and disagreement centrality}

A key prediction of Theorem~\ref{THM: dis-agree dicothomy} is that the absolute value of the entries of the adjacency matrix's eigenvectors $v_{max}$ and $v_{min}$ predict (up to linear order) the opinion strength (and sign) of each agent along agreement and disagreement bifurcation branches, respectively. This result suggests the following notion of centrality for agreement and disagreement opinion formation.

Let $\tilde A$ be the unweighted adjacency matrix of an undirected, connected network. Let $v_{max}$ be the normalized positive eigenvector associated to the unique maximal eigenvalue of $\tilde A$. Assume that the minimal eigenvalue is also unique and let $v_{min}$ be the associated normalized eigenvector.

\begin{definition}\label{def:centralities}
	We define {\em agreement centrality of agent $i$} as $[v_{max}]_i$,  {\em disagreement centrality of agent $i$} as $|[v_{min}]_i|$, and  {\em signed disagreement centrality of agent $i$} as $[v_{min}]_i$
\end{definition}

The proposed notion of agreement centrality is the same notion of eigenvector centrality introduced in~\cite{bonacich1972}. The proposed notion of disagreement centrality is new. Both notions of centrality will be key to analyzing and controlling opinion cascades over networks. It makes sense to talk about a {\it signed} disagreement centrality because the eigenvector $v_{min}$ has mixed sign entries (whereas $v_{max}$ can always be chosen to be positive). We will show that signed disagreement centrality is instrumental to controlling disagreement opinion cascades. For directed networks, one could use the eigenvectors of $\tilde A+\tilde A^T$ to define suitable centrality indices. We  leave these ideas for future investigations.

\subsection{Control of agreement opinion cascades by centrality assignment}
\label{ssec: agreement control}

Suppose that we can choose $N_{seeds}$ agents to receive input, and suppose that each input has the same norm $\bar b$, but brings potentially different information (i.e., favoring or disfavoring) about, say, Option~1\footnote{The proposed agreement cascade control algorithm naturally generalizes to the case in which inputs bring mixed information about the various options.}.

\begin{itemize}
	\item {\bf Towards maximization of agreement cascade sensitivity.} Let the control objective be to maximize the sensitivity of agreement opinion cascades to the $N_{seeds}$ inputs. Then an effective approach is to select the  cascade seeds to be the $N_{seeds}$ most central agents, according to the agreement centrality.
	\item {\bf Towards minimization of agreement cascade sensitivity.} Let the control objective be to minimize the sensitivity of agreement opinion cascades to the $N_{seeds}$ inputs. Then an effective approach is to select the cascade seeds to be the $N_{seeds}$ least central agents, according to the agreement centrality.
\end{itemize}

The rationale behind the proposed control strategy is the following. Maximally central agents are those that tend to develop stronger opinions at an agreement bifurcation. Triggering the network attention/opinion-strength positive feedback at maximally central agents therefore enhances the network response toward crossing the agreement cascade threshold. Likewise, minimally central agents tend to develop small opinions at an agreement bifurcation. Triggering the network attention/opinion-strength positive feedback at minimally central agents therefore diminishes the network response and keeps it away from crossing the agreement cascade threshold.

We illustrate the effectiveness of our agreement cascade control strategy in an eleven-agent network with $N_{seeds}=2$ and $\No=3$. Figure~\ref{fig: lowD control}A1 shows the network topology and the resulting node agreement centralities (visualized as the node sizes). The two black arrows point to the two most central agents. The two gray arrows point to the two less central agents. All agents have an intrinsic attention threshold $u_{th}=0.1$. When the two inputs affect the two most central agents following our sensitivity-maximization algorithm, inputs with amplitudes as small as $0.1$ are sufficient to trigger a full network cascade with large average opinion strength (Figure~\ref{fig: lowD control}B1 - black curve) and in which all agents become opinionated (Figure~\ref{fig: lowD control}C1, top). Conversely, when the two inputs affect the two less central agents following our sensitivity-minimization algorithm, the network is still resisting to trigger a full cascade even for input amplitude as large as $1.0$ (Figure~\ref{fig: lowD control}B1 - gray curve). Only the two agents receiving an input are opinionated, while the rest of the agents remain close to neutral (Figure~\ref{fig: lowD control}C1, bottom).

\begin{figure}
	\centering
	\includegraphics[width=0.8\textwidth]{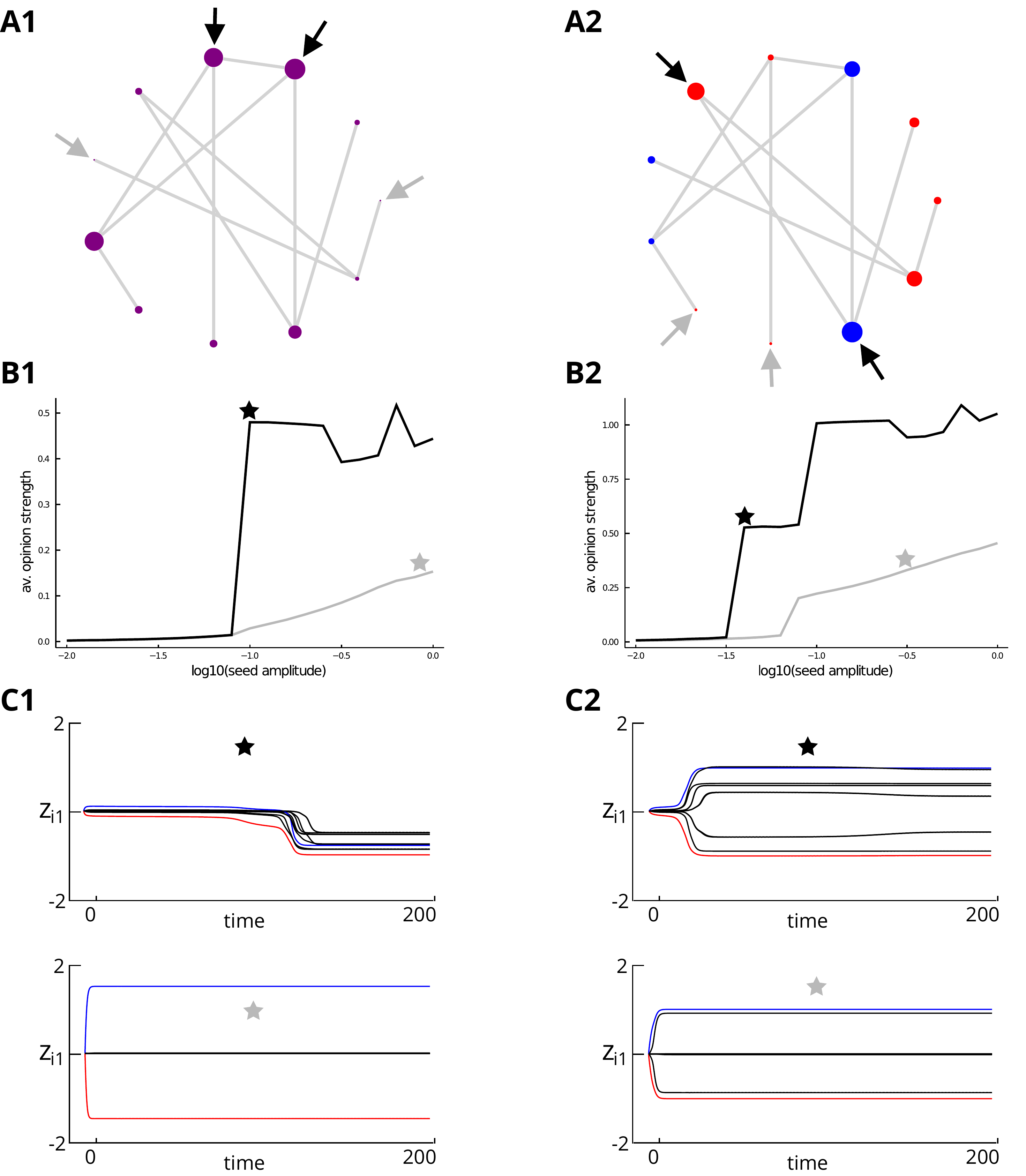}
	\caption{Control of agreement and disagreement opinion cascade by centrality assignment. A1: network topology and agreement centrality. The node size is proportional to its agreement centrality. A2: network topology and signed disagreement centrality. The node size is proportional to its (absolute) disagreement centrality. Blue nodes have positive signed disagreement centrality. Red nodes have negative signed disagreement centrality. B1: Average opinion strength as a function of the seed input amplitude under agreement cascade sensitivity maximization (black curve) and agreement cascade sensitivity minimization (gray curve). The network is in an agreement/cooperative regime. B2: Average opinion strength as a function of the seed input amplitude under disagreement cascade sensitivity maximization (black curve) and disagreement cascade sensitivity minimization (gray curve). The network is in a disagreement/competitive regime. C1: Evolution of agents' opinion about Option 1 in the agreement regime for the seed input amplitude indicated by the stars in B1. Top: under sensitivity maximization. Bottom: under sensitivity minimization. C2: Evolution of agents' opinion about Option 1 in the disagreement regime for the seed input amplitude indicated by the stars in B2. Top: under sensitivity maximization. Bottom: under sensitivity minimization. \af{Parameters: $\alpha=0.2$, $\beta=-0.5$, $\gamma=-\delta=0.1$ (agreement case), $\gamma=-\delta=-0.1$ (disagreement case), $u_{th}=0.1$, $n=5$, $\delta_u=0.2$}.  }
	\label{fig: lowD control}
\end{figure}

\subsection{Control of disagreement opinion cascades by centrality assignment}
\label{ssec: disagreement control}

The control strategy described in the previous section to control the sensitivity of agreement opinion cascade can be generalized to the disagreement case. In this case, we  exploit the notion of signed disagreement centrality introduced in Definition~\ref{def:centralities}. Suppose again that we can choose $N_{seeds}$ agents to receive input, and suppose that each input has the same norm $\bar b$, but brings potentially different information (i.e., favoring or disfavoring) about Option~1\footnote{The proposed disagreement cascade control algorithm naturally generalizes to the case in which inputs bring mixed information about the various options as long as ${\bs b}_{i}=\pm {\bs b}_0$, $i=1,\ldots,\Na$.}. The proposed control strategy is the following.

\begin{itemize}
	\item {\bf Towards maximization of disagreement cascade sensitivity.} Let the control objective be to maximize the sensitivity of disagreement opinion cascades to the $N_{seeds}$ inputs. Then an effective approach is to select  the cascade seeds as follows. Assign inputs bringing information in favor of Option~1 to the agents with the largest {\it signed} disagreement centrality. Assign inputs bringing information in disfavor of Option~1 to the agents with the smallest {\it signed} disagreement centrality.
	\item {\bf Towards minimization of disagreement cascade sensitivity.} Let the control objective be to minimize the sensitivity of disagreement opinion cascades to the $N_{seeds}$ inputs. Then an effective approach is to select cascade seeds to be  the $N_{seeds}$ least central agents, according to the (absolute) disagreement centrality index.
\end{itemize}

The rationale behind the proposed control strategy is the following. Two agents with large positive and large negative signed disagreement centrality, respectively,  naturally tend to develop strongly disagreeing opinions at a disagreement bifurcation. Injecting two discordant inputs at two such agents, on the one hand, will contribute to the triggering of the network attention/opinion-strength positive feedback and, on the other hand, will amplify the disagreeing nature of the two agents. Taken together, these two effects greatly enhance the network response towards crossing the disagreement cascade threshold. Likewise, minimally central agents tend to develop small opinions at a disagreement bifurcation. Triggering the network attention/opinion-strength positive feedback at minimally central agents therefore great diminishes the network response and keeps it away from crossing the disagreement cascade threshold.

We illustrate the effectiveness of our disagreement cascade control strategy on the same eleven-agent network used for the agreement case. Figure~\ref{fig: lowD control}A2 shows the resulting signed disagreement centralities, where red (blue) corresponds to negative (positive) signed disagreement centrality and the node size is proportional to the (absolute) disagreement centrality. As for the agreement case, our control strategy maximizes network sensitivity when inputs are placed on the most central agents (indicated by black arrows) according to our disagreement sensitivity-maximization algorithm, resulting in large disagreement cascades, with large average opinion strength (Figure~\ref{fig: lowD control}B2 - black curve) and all agents exhibiting strong opinions (Figure~\ref{fig: lowD control}C2, top), for inputs with amplitude as small as $10^{-1.4}\simeq 0.04$. Conversely, when inputs are placed on the less central agents (indicated by gray arrows) according to our disagreement sensitivity-minimization algorithm,  the network is still resisting to trigger a full disagreement cascade even for input amplitude as large as $10^{-0.5}\simeq 0.3$ (Figure~\ref{fig: lowD control}B2 - gray curve). All but four agents remain close neutral (Figure~\ref{fig: lowD control}C2, bottom).

\section{\af{Multi-Robot Task-Allocation Disagreement Cascades}}
\label{sec:task allocation cascades}

We now introduce attention dynamics~\eqref{eq:udot} into the distributed multi-robot task-allocation dynamics~\eqref{eq:multirobot_task_allocation} introduced in Section~\ref{sec:task allocation}. The benefit of introducing attention dynamics in the task-allocation algorithm is to let the robot engage in task allocation only in the presence of a sufficiently large urgency to accomplish the tasks. We assume that only a limited number of robots (e.g., those equipped with suitable sensors, or those located in suitable positions) can sense real-time changes in task urgency above or below the basal urgency levels $\mu_j$. We call urgency-sensing robots {\it zealous}. Zealous robots act as the seeds, or leaders, of task-allocation cascades.

An interesting property of the Frucht graph is that, despite all agents having the same degree, node disagreement centralities are highly heterogeneous, as illustrated in Figure~\ref{fig:task allocation cascade}A. In such a situation, our cascade-control strategy becomes crucial for efficient network placement of zealous robots.

\begin{figure}
	\centering
	\includegraphics[width=0.75\textwidth]{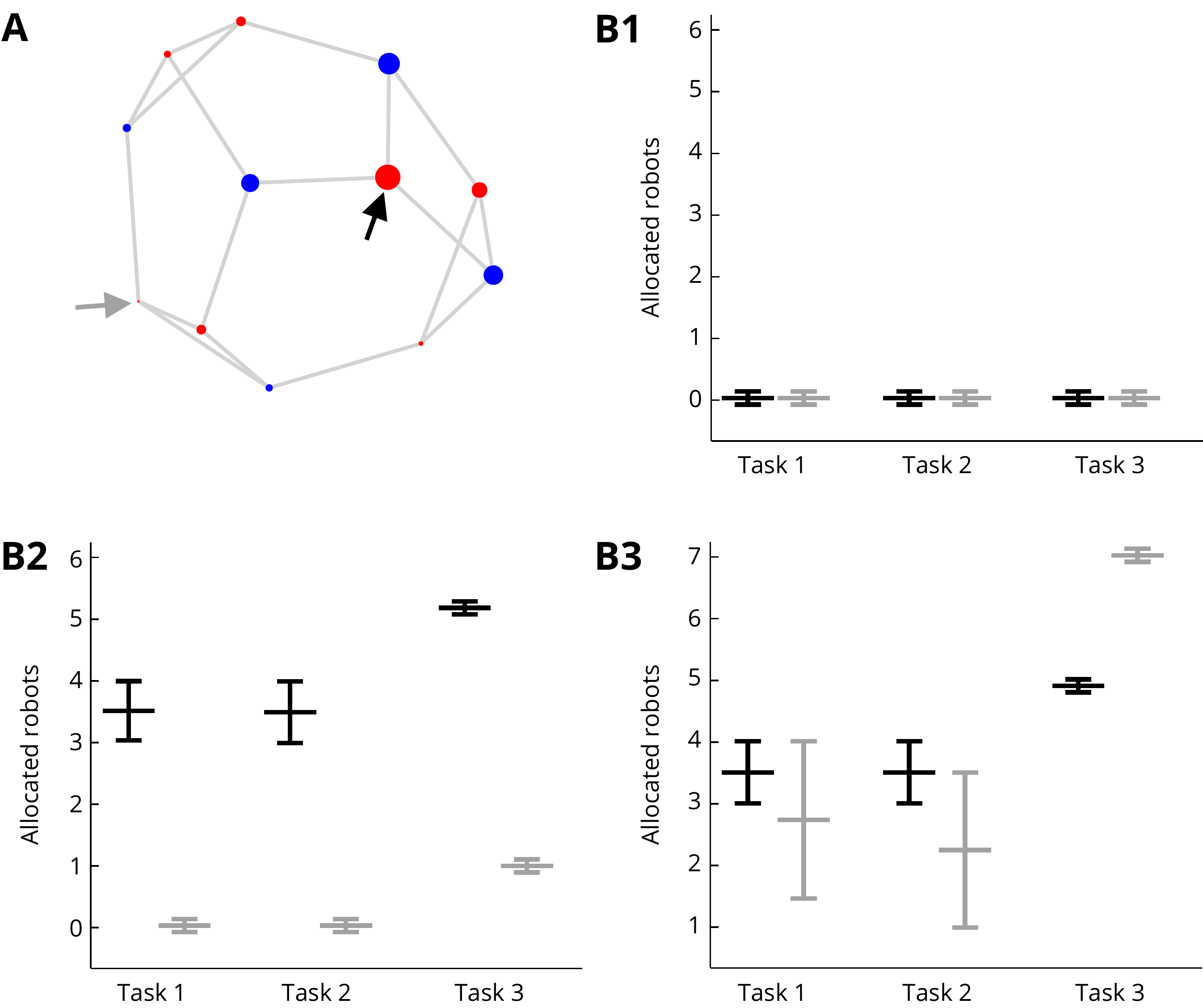}
	\caption{Distributed allocation over three tasks by robot swarm with attention dynamics, one zealous robot, and communication defined by the Frucht graph. The zealous robot perceives an increase in task urgency defined by $\rho_3$. A. The Frucht graph and its node signed disagreement centralities graphically represented as the node size and color (the larger the node, the larger its disagreement centrality; blue means positive signed disagreement centrality and red means negative signed disagreement centrality). The black (gray) arrow indicates the most (least) central node.  B.   Average (large bar) and standard deviation (short bar) of the number of agents allocated to Tasks~1,2,3 over one-hundred instances of our task-allocation cascade algorithm. Black: zealous robot is most central node. Gray: zealous robot is least central node. See main text for details. B1: $\rho_3=10^{-0.9}$. B2: $\rho_3=10^{-0.6}$. B3: $\rho_3=10^{-0.3}$. Basal task urgency: $\mu_1=\mu_2=0.3$ and $\mu_3=0.4$. Other parameters as Figure~\ref{fig:task allocation}.}.
	\label{fig:task allocation cascade}
\end{figure}

To illustrate this fact, we consider a robot swarm as described by Figure~\ref{fig:task allocation cascade}A and suppose that only one zealous robot is available, which detects an increase in the urgency of Task 3 of magnitude $\rho_3$. More precisely, if $i_z$ is the index of the zealous robot, then $\nu_{i_z 3}=3\rho_3$, 
and, by~\eqref{eq:multirobot_task_allocation}, the effective urgency of Task~3 perceived by the zealous robot is $\mu_3+\rho_3$. Furthermore, all robots are equipped with the attention dynamics~\eqref{eq:udot}, with $u_{min}=u_d/2$, $u_{max}=2u_d$, $u_{th}=0.1$, $n=5$.

Figure~\ref{fig:task allocation cascade}B reproduces the model behavior in two cases: when the zealous robot is  the least central node (according to disagreement centrality) and when the zealous robot is the most central node (according to signed disagreement centrality). As in Figure~\ref{fig:task allocation}B, we run one-hundred simulations with random initial conditions and small random zealousness parameter perturbations for three different values of $\rho_3$, i.e., the urgency increase of Task 3 as perceived by the zealous robot. For small $\rho_3$ (Figure~\ref{fig:task allocation cascade}B1), 
the zealous robot does not trigger a task-allocation cascade  when it is in the least central or most central location in the network. All robots remain neutral (i.e., not allocated). 

For intermediate values of $\rho_3$ (Figure~\ref{fig:task allocation cascade}B2), only when the zealous robot is in the most central location does the task-allocation cascade get triggered. Observe that the standard deviation of the number of agents allocated to Task~3 drops to zero and that the standard deviations of the number of agents allocated to Tasks~1 and~2 also drop drastically as compared to the model behavior without attention (Figure~\ref{fig:task allocation}B). Furthermore, the ratio between the average number of robots allocated to Task~3 and the average number of robots allocated to Task~1 or Task~2 is closer to the ratio of the tasks' basal urgency $\mu_3/\mu_1=\mu_3/\mu_2$ as compared to the model behavior without attention. We will analytically investigate these observations in future works. 

Finally, for large values of $\rho_3$ (Figure~\ref{fig:task allocation cascade}B3), a task-allocation cascade is triggered when the zealous robot is in the most central or least central location.  
However, only when the zealous robot is the most central node, do the robots allocate according to the task urgencies. When the zealous robot is the least central node, a disproportionate number of robots is allocated to Task~3. The allocation is not efficient. To summarize, disagreement centrality is key in placing urgency-sensing robots in a distributed task-allocation setting.






\section{Application to large-scale complex networks}
\label{sec: complex networks}

In this section we provide numerical evidence that the analysis and control strategies introduced in this work scale up naturally to large networks with complex interactions. For comparison purposes, we focus on two types of complex networks: Watts-Strogatz with large rewiring probability and Barabasi-Albert. For the sake of illustration, we fix $\Na=100$ and $\No=2$ or $\No=3$.

\subsection{Centrality index predicts complex networks opinion formation behavior}

The agreement and disagreement opinion formation behavior in Watts-Strogatz networks (Figure~\ref{fig:dis-agreement WS}) exhibits quantitative differences as compared to Barabasi-Albert networks (Figure~\ref{fig:dis-agreement BA}). Despite possessing the same average degree and for the same set of parameters, in the Barabasi-Albert network (Figure~\ref{fig:dis-agreement BA}) agents tend to develop weaker opinions as compared to the Watts-Strogatz network (Figure~\ref{fig:dis-agreement WS}). Also, in the disagreement regime, only in the Watts-Strogatz network do the group opinions organize into markedly bimodal (polarized) distributions. These two observations can be explained and predicted in terms of the different centrality distributions in the two networks.

As predicted by Theorem~\ref{THM: dis-agree dicothomy}, at an opinion-forming bifurcation, the agent states are distributed along a one-dimensional kernel subspace according to the entries of the adjacency matrix's eigenvectors $v_{max}$ and $v_{min}$, corresponding to its maximum (agreement) and minimum (disagreement) eigenvalues, respectively. The opinion strength of agent $i$ along an agreement (resp. disagreement) branch is thus expected to be roughly (i.e., modulo nonlinear terms in the center manifold expansion) proportional to the absolute value of the $i$-th entry of $v_{max}$ (resp. $v_{min}$). Figure~\ref{fig:centr distr vs norm} illustrates this fact by plotting $|[v_{max}]_i|$ (resp.~$|[v_{min}]_i|$) against $||\Zz_i^*||$, i.e., the opinion strength of agent $i$ at the agreement (resp. disagreement) equilibrium reached for $u$ slightly above the agreement (resp. disagreement) bifurcation point~\eqref{eq:agrement threshold} (resp.~\eqref{eq:disagrement threshold}). In all cases, one can observe a monotone relationship between the two quantities, and the relationship is basically linear for the largest entries/opinion strengths. However, for such large networks ($\Na=100$ in the numerical examples), the high-dimensionality of the state space makes nonlinear effects noticeable in some cases, particularly at small entries/opinion strengths.

\begin{figure}
	\centering
	\includegraphics[width=\textwidth]{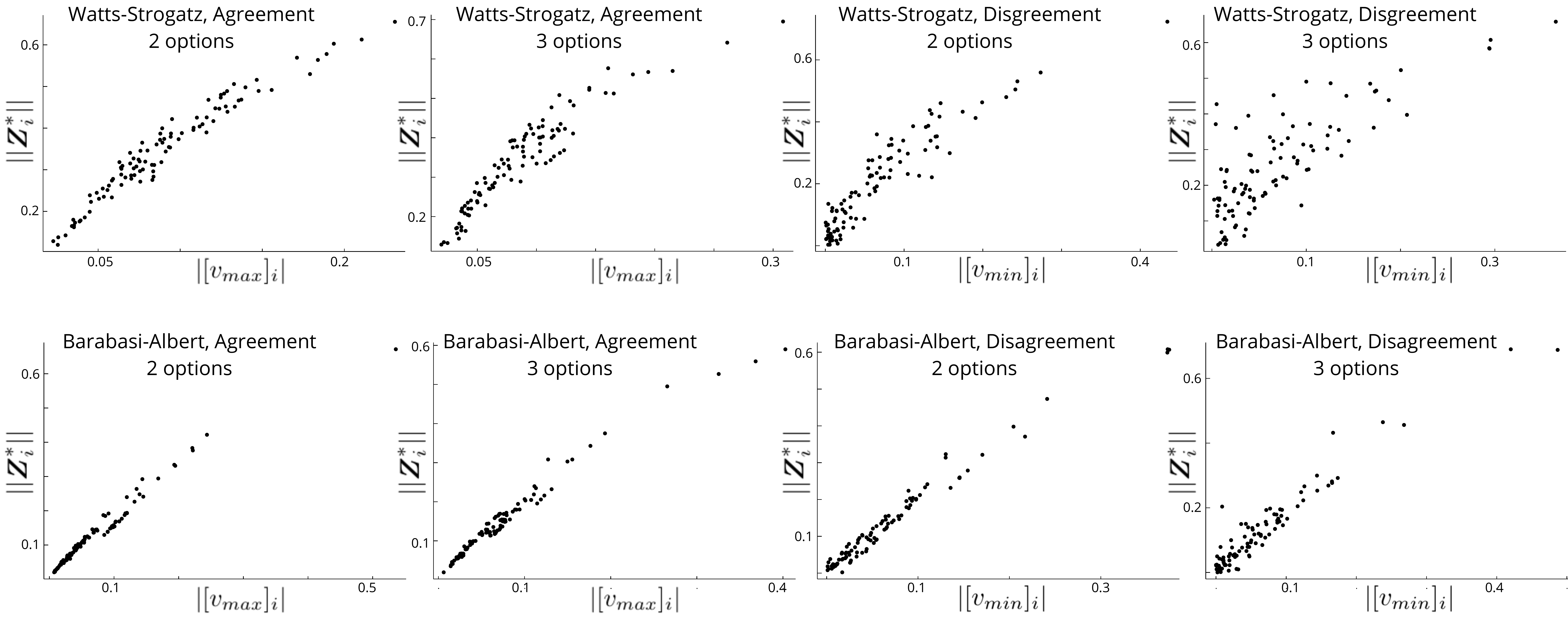}
	\caption{First and second column: Scatter plots of agent opinion strengths $||\Zz_i^*||$ (vertical axes) at agreement steady-states against the absolute value of the entries of the adjacency matrix's eigenvector $v_{max}$ (horizontal axes). Third and fourth columns: Scatter plots of agent opinion strengths $||\Zz_i^*||$ (vertical axes) at disagreement steady-states against the absolute value of the entries of the adjacency matrix's eigenvector $v_{min}$ (horizontal axes) Plot titles indicate the network type, if the bifurcation was of agreement or disagreement type, and the number of options.}
	\label{fig:centr distr vs norm}
\end{figure}

The quantitative difference between agreement and disagreement opinion formation on Watts-Strogatz and Barabasi-Albert networks can thus be understood (and predicted) by looking at the distribution of the absolute values of the entries of $v_{max}$ and $v_{min}$. Figure~\ref{fig:centr distr} shows the empirical distributions $f\left( \left|[v_{max}]_i\right| \right )$ and $f\left( \left|[v_{min}]_i\right| \right )$ of $|[v_{max}]_i|$ and $|[v_{min}]_i|$ obtained from one thousand realizations of Watts-Strogatz and Barabasi-Albert networks with the same size and parameters as those used in the simulation in Figures~\ref{fig:dis-agreement WS} and~\ref{fig:dis-agreement BA}. Both for $v_{max}$ and $v_{min}$, the distributions are heavier at smaller values and lighter at larger values for the Barabasi-Albert case than for the Watts-Strogatz case. Both the median and expected value are larger in Watts-Strogatz networks than in Barabasi-Albert networks. In other words, both agreement and disagreement opinion formation in Barabasi-Albert networks tend to be characterized by a larger (resp. smaller) number of agents developing weak (resp. strong) opinions than in Watts-Strogatz networks. For the disagreement case, this translates into the fact that disagreeing agents tend to polarize more easily over a Watts-Strogatz topology than over a Barabasi-Albert topology. 

\begin{figure}
	\centering
	\includegraphics[width=\textwidth]{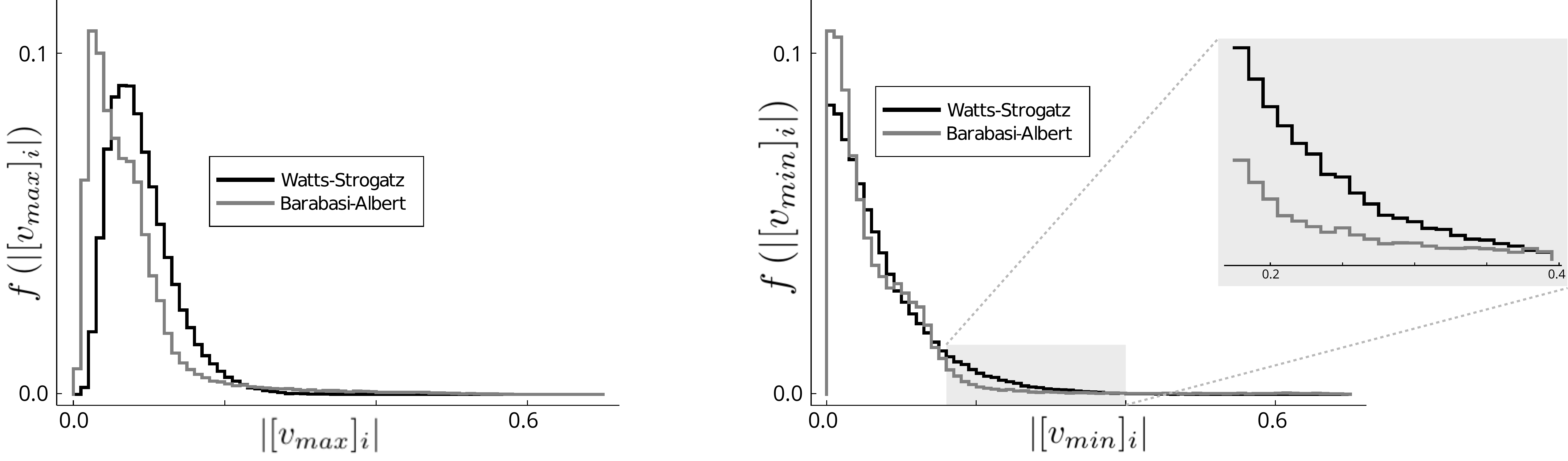}
	\caption{Empirical distributions of the absolute values of the entries of $v_{max}$ (left) and $v_{min}$ (right) for Watts-Strogatz (black curves) and Barabasi-Albert (gray curves) networks. 
	}
	\label{fig:centr distr}
\end{figure}

\subsection{Centrality index predicts complex networks opinion cascade behavior}
\label{ssec: centrality and complex network analysis}

Figure~\ref{fig:attention threshold} highlights another quantitative difference between Watts-Strogatz and Barabasi-Albert networks.  
Only in Watts-Strogatz networks does the threshold behavior exhibit a marked all-or-none characteristic, in the sense that only for this kind of network does the average opinion switch from close to zero to a relatively constant non-zero value. This indicates that Watts-Strogatz networks are able to robustly trigger an opinion cascade as soon as the seed amplitude is above threshold, roughly independently of seed placement (which is random in the ten repetitions over which the results were averaged). Conversely, in Barabasi-Albert networks, random seed placement is incapable of robustly triggering a cascade. The averaged data does not exhibit a sharp threshold because different seed placements lead to sharply different cascade thresholds. As a consequence, the average opinion strength exhibits a strong linear dependence on the seed amplitude, reflecting the increasing fraction of cases in which even highly inefficient seed placements are able to trigger a cascade.

The quantitative differences between Watts-Strogatz and Barabasi-Albert networks can again be interpreted in  simple algebraic terms from their agreement and disagreement centrality distributions (Figure~\ref{fig:centr distr}). Because Barabasi-Albert networks tend to have smaller centrality values than Watts-Strogatz networks, random agents used as seeds tend to develop weaker opinions in the former than in the latter. As a consequence, it is harder to spread the attention threshold-crossing behavior in Barabasi-Albert networks than in Watts-Strogatz networks because, on average, the attention-opinion positive feedback loop is less active in the former than in the latter.

\af{Qualitatively, one could have predicted the same result by looking at more classical network properties, like the global clustering coefficient or the network diameter, which are both smaller in WS than BA networks. However, such statistical indices do not predict the patterns of opinion formation as our centrality indices do. In particular, they cannot distinguish between agreement and disagreement opinion formation and cascades. The crucial point of Theorem 1 is indeed that the bifurcation behavior at an agreement or a disagreement bifurcation is fully determined by the agreement and disagreement centralities, respectively. Local indices of clustering or shortest path length, like eccentricity, are also unsuitable to predict agreement and disagreement opinion formation and cascade behavior because (to the best of our knowledge) there are no general or simple connections between our notion of agreement and disagreement centrality and other local network properties.}

\subsection{Control of opinion cascade in complex networks}

We finally illustrate the effectiveness of the control strategy introduced in Sections~\ref{ssec: agreement control} and~\ref{ssec: disagreement control} to maximize or minimize the sensitivity of agreement and disagreement opinion cascades to inputs on a large scale complex network. We use a Watts-Strogatz network in the case $\No=2$ and $\No=3$ and for two values of the intrinsic threshold $u_{th}$, which is assumed to be equal for all agents. Recall from Section~\ref{ssec:cascade tuneable} and Figure~\ref{fig:attention threshold} that the effect of increasing (resp., decreasing) $u_{th}$ is to decrease (resp., increase) the global sensitivity of the network cascade to inputs.

Figures~\ref{fig:dis-agreement control} top left (agreement, $\No=2$) and~\ref{fig:dis-agreement control} bottom left (agreement $\No=3$) show that an agreement opinion network with a global small sensitivity ($u_{th}=0.1$) can be made as sensitive as (or even more sensitive than) a network with relatively large global sensitivity ($u_{th}=0.05$) by using our control strategy to maximize the agreement cascade sensitivity of the former (dashed-dotted lines) and minimize the agreement cascade sensitivity of the latter (full lines). For all cases, there is roughly an order of magnitude between the seed amplitudes required to trigger a cascade in the same network controlled towards maximizing versus minimizing its sensitivity.

\begin{figure}
	\centering
	\includegraphics[width=\textwidth]{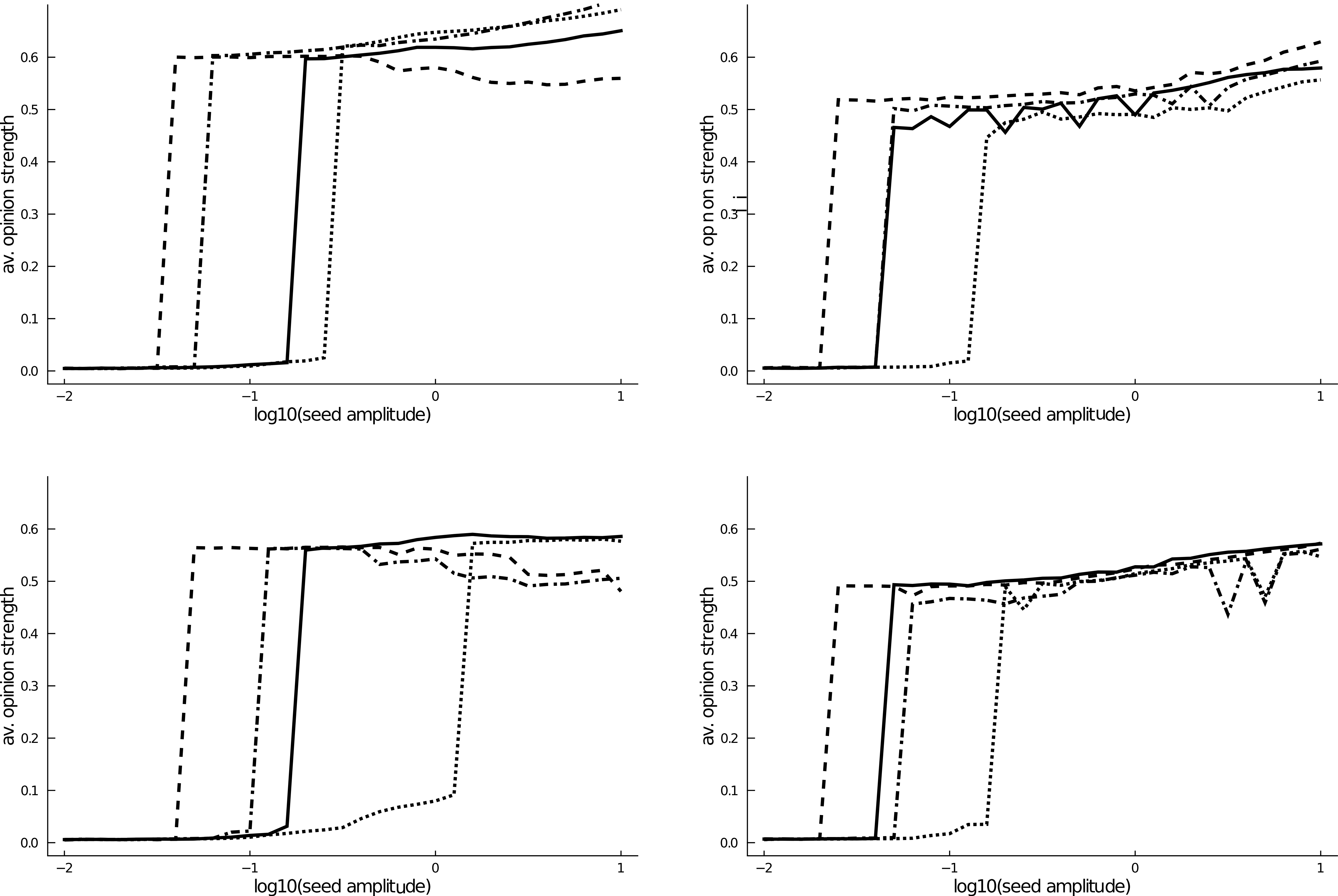}
	\caption{Control of agreement (left plots) and disagreement (right plots) opinion cascades for $\No=2$ (top) and $\No=3$ (bottom) in a Watts-Strogatz network with $\Na=100$ agents. Full and dashed lines correspond to $u_{th}=0.05$ (i.e., relatively large global sensitivity) with minimizing and maximizing sensitivity control, respectively. Dotted and dashed-dotted lines correspond to $u_{th}=0.1$ (i.e., relatively small global sensitivity) with minimizing and maximizing sensitivity control, respectively.}
	\label{fig:dis-agreement control}
\end{figure}

Figures~\ref{fig:dis-agreement control} top right (disagreement, $\No=2$) and~\ref{fig:dis-agreement control} bottom right (disagreement $\No=3$) show that also in the disagreement case a network with a global small sensitivity ($u_{th}=0.1$) can be made as sensitive as (or even more sensitive than) a network with relatively large global sensitivity ($u_{th}=0.05$) by using our sensitivity control strategy. As in the agreement case, for a fixed network topology, the seed amplitude required to trigger a cascade under sensitivity-maximizing control is much small (roughly half of an order of magnitude) than the seed amplitude required to trigger a cascade under sensitivity-minimizing control. \af{As discussed at the end of Section~\ref{ssec: centrality and complex network analysis}, we are not aware of local network properties other than our agreement and disagreement centrality indices that could provide such an efficient agreement and disagreement opinion-cascade control.}

\section{Discussion and future directions}
\label{sec: discussion}

\subsection{Summary of results}

The possibility of forming an opinion in favor or disfavor of different options makes opinion cascades intrinsically different from binary network cascades (e.g., spread of fads or information) usually analyzed and modeled in the literature. Motivated by these differences, we introduced two types of centrality for opinion cascades: agreement centrality  and disagreement centrality. The former predict the pattern of opinion formation when agents cooperate in the opinion formation process (i.e., when they tend to align their opinion to that of other agents) and provide an efficient, although heuristic and possibly sub-optimal, way to control the sensitivity of agreement opinion cascades by seed placement. The latter predict the pattern of opinion formation when agents compete in the opinion formation process (i.e., when they tend to reject other agents' opinions) and provide an efficient, although heuristic and possibly sub-optimal, way to control the sensitivity of disagreement opinion cascades by seed placement. We illustrated the practical validity of our theoretical predictions through an original example of distributed task allocation in multi-robot swarms and through numerical simulations in large random networks. Our centrality indices are grounded in the bifurcation behavior from a neutral to an opinionated state. To the best of our knowledge, this is the first time that centrality indices arise from intrinsically nonlinear bifurcation phenomena.

\subsection{\af{Agreement vs disagreement centrality}}

Our notion of agreement centrality is the same notion of eigenvector centrality introduced in~\citep{bonacich1972}. Our notions of disagreement centrality are new. Crucially, the two indices of centrality can be sharply different for the same graph. The Frucht graph used for the task-allocation example in Section~\ref{sec:task allocation} reveals a key difference between agreement and disagreement centrality: whereas the former cannot distinguish nodes in a regular graph (eigenvector centrality is the same for all nodes in a regular graph), the latter can. We thus suggest that disagreement centrality might turn out to be a key  to understanding complex networks where disagreement opinion formation is the norm, e.g., sociopolitical networks.

\subsection{\af{Simple and complex contagion}}

The seminal work~\citep{centola2007complex} revealed that the same network topology, in particular, in the presence of the bridges typical of small-world networks, can behave very differently during ``simple'' versus ``complex'' contagion. A simple contagion is one in which a single active neighbor is sufficient to spread the contagion to a node. A complex contagion is one in which multiple active neighbors are necessary to spread the contagion to a node. Because  time and state-space are continous in our model, the difference between simple and complex contagions becomes nuanced, i.e., contagions can simultaneously be simple and complex. In the presence of neighbors with sufficiently strong opinions the contagion is simple, because a single, strongly opinionated neighbor can be sufficient to let a given agent cross the cascade threshold. Conversely, if neighbors' opinions are weak the contagion is complex, because multiple opinionated neighbors are necessary to let a given agent cross the cascade threshold. The weaker the neighbors' opinions, the more complex the contagion. The results of this paper show that our notion of agreement and disagreement centrality are naturally suited to analysis and control of mixed, i.e., both simple and complex, contagions.

\subsection{Extensions and future directions}

\subsubsection{Joint centrality and connection with other centrality indices}

The proposed centralities exactly (up to nonlinear order in the center manifold expansion) characterize the pattern of agreement and disagreement opinion formation for {\it open-loop} changes of the attention parameter in the homogeneous system~\eqref{EQ:Specialized Dynamics} without inputs. As such, they have been shown to provide a powerful heuristic for cascade seed placement. However, whether the resulting seed-placement algorithm is optimal remains an open question.  A number of other centrality indices (e.g., information centrality~\citep{stephenson1989rethinking}, diffusion centrality~\citep{banerjee2013diffusion}) are available for comparison. Most importantly, the results presented in~\cite{ClarkSupermodular2013,Fitch2016} show that agent-selection based on centrality becomes subtle if more than one agent must be selected. In the case of two agents, simply selecting the first and second most central agent generically does not provide the optimal solution, at least in highly regular networks. In particular, the notion of joint centrality introduced in~\cite{Fitch2016} provides the tool to more rigorously tackle the multi-agent selection problem with respect to a given centrality index. We will explore how the notion of joint centrality applies to the proposed centralities in future works.

\subsubsection{Heterogeneous agent thresholds}

As in classical work on network cascades~\citep{granovetter1978threshold,schelling2006micromotives,watts2002simple,lim2015simple,garulli2015analysis,rossi2017threshold}, the parameters of the closed-loop attention dynamics, and thus the resulting single-agent cascade threshold, can be assumed to be heterogeneous and drawn from some random distribution. By the intrinsic robustness of hyperbolic dynamical systems organized by bifurcations~\citep{Golubitsky1985}, for small heterogeneity, we expect the the result derived in this paper to remain robustly true. How large random heterogeneity interacts with the underlying network structure is an open question.

\subsubsection{Non all-to-all intra-agent coupling}

Another important extension will be to consider other than all-to-all intra-agent coupling topologies in the homogeneous regime~\eqref{EQ:Specialized Dynamics}. All-to-all intra-agent coupling models the case in which options are {\it a priori} equivalent, with no hierarchies or clustering among them. Introducing more structured intra-agent coupling allows to group and order options into different issues, with or without hierarchies among them. This modeling option is particularly relevant when the opinion developed over an issue (e.g., at which restaurant to have dinner) influences and is influenced by opinion formation over another issue (e.g., which type of dish to eat). We will extend our modeling and analysis work to other than all-to-all intra-agent coupling in future work.

\subsubsection{Time-varying network topologies and communication loss}

We remark that, although not yet addressed by our theory, our modeling framework naturally allows to model and exploit time-varying network topologies. For instance, the network topology can change dynamically as a function of the agents' opinion state (as in classical bounded-confidence opinion-formation models~\cite{LiScaglione2013}), which can naturally lead to clustered networks with clusters of agents having similar opinions, e.g., clusters of robots accomplishing similar tasks (see~\cite[Theorem~III.5]{bizyaeva2020general} for a static topology clustering result).
\af{Time-varying network topologies can also arise from communication loss between the agents, e.g., due to faults or when robots move too far away from each other. Crucially, as detailed in Section~\ref{sec:attention dynamics}, our attention dynamics realizes a local multi-stable memory of the opinion developed during a cascade. This distributed memory might naturally be suited to make the opinion formation behavior robust to temporary loss of communication between two or more agent subpopulations.}

\subsubsection{\af{Reversal to a neutral state through excitable attention-opinion dynamics}}

In some occasions it might be useful for the swarm to revert from an opinionated state back to a neutral state, e.g., when a set of tasks or a foraging mission has been accomplished. This can be achieved by making the attention-opinion dynamics excitable, i.e., the transition to an excited (opinionated) state from the neutral state in response to inputs is only transient. Mimicking neural excitable dynamics~\citep{fitzhugh1961impulses,hodgkin1952quantitative}, the multi-stable attention-opinion dynamics present in each agent can be transformed into  excitable dynamics by adding a slow adaptation variable, providing slow negative feedback on the attention-opinion dynamics. Fast positive feedback acting in conjunction with slow negative feedback constitutes the basic motif of excitability~\citep{franci2018robust}. Depending on the timescale of the slow adaptation variable, a transition to opinion through a cascade is returned to neutral after a tunable delay, allowing  the swarm to become again reactive to new circumstances. The implementation of such an excitable attention-opinion dynamics will be the subject of future work.

\subsection{Applications beyond multi-robot task-allocation}

\subsubsection{A flexible best-of-$n$ {\it and} task-allocation model}

Our model can rapidly and reliably switch between agreement and disagreement behavior. As such, it constitutes a dynamical opinion formation model that, when applied to robot swarms, can rapidly and reliably switch between best-of-$n$ and task allocation. This flexibility can be key for adaptive behavior in robot swarms. For instance, during a foraging task \citep{zedadra2017multi,pitonakova2018information} the possibility to switch between exploration, when the robots use the opinion dynamics in the disagreement regime to allocate them to explore different directions, 
and exploitation, when the robots use the opinion dynamics in the agreement regime to reach a consensus about the richest source location, is key to adaptive capabilities for the swarm. The parameters $\gamma$ and $\delta$, which determine the agreement ($\gamma>\delta$) or disagreement ($\gamma<\delta$) regime, can be equipped with input and state-dependent dynamics as in~\citep[Section VI.D]{bizyaeva2020general}, which allows for flexible transitions between exploration and exploitation as a function of the information collected about the environment and the state of each robot.

Opinion cascades, both in a best-of-$n$ or a task-allocation setting, are useful when only a limited number of information-gathering agents are available. The information-gathering agents work as the seeds of an opinion cascade in the robot swarm. Through our threshold mechanism, a robot swarm can decide to remain in a neutral state (i.e., ``do nothing'' or ``stay charging at the dock'') until a sufficiently large amount of information has been gathered by the seeds. Past this threshold, opinion formation spreads through the network. The resulting opinion (i.e., task-allocation) state depends on the network topology, on the seed opinion states, and on possibly small (i.e., sub-threshold) biases of non-information gathering agents. Importantly, the seed-placement algorithm developed in Section~\ref{sec: cascade control} can be used to tune the swarm sensitivity to engage in the multi-task activity. When the sensitivity is small, the swarm behaves ``lazily,'', with relatively large inputs needed to trigger the transition to an active (i.e., opinionated) state. When the sensitivity is large, the swarm behaves ``hyperactively,'' with only relatively small inputs needed to trigger the beginning of the multi-task activity.

\subsubsection{Biological and artificial swarms in motion}

In both animal and robot swarms in motion, the decision to be made is about when and where to move. As discussed in~\cite{AF-MG-AB-NEL:20} our modeling framework naturally accommodates this situation, too, by letting the attention/bifurcation parameter be related to the geometry of the motion space, e.g., the group distance and relative position with respect to relevant objects like food sources, prey or predators, resources, or searched items. A number of works showed the relevance of similar approaches for  biological systems~\citep{Couzin2005a,Nabet2009,Couzin2011,Leonard2012,Pinkoviezky2018}. Collision avoidance~\citep{wang2017safety,van2011reciprocal} can also be modeled and tackled in our opinion formation framework. A neutral opinion means in this setting moving straight, favoring Option~1 means turn right and favoring Option~2 means turn left. If in agreement regime, two colliding robots will either both turn right or turn left and collision will be avoided. In this simple setting, the attention parameter dynamics can be a function of the robots' relative positions.

Similarly to the multi-task setting, the introduction of opinion cascades allows to model and design agent swarms in motion in which only a small number of agents (i.e., those that can sense objects and items in the environment) gather information about the geometry of the motion space. The same emergent behaviors described in the multi-task setting (e.g., tunable sensitivity of the swarm to informative inputs) are to be expected in the motion setting.

\subsubsection{Sociopolitical networks and political polarization}

The notion of ``opinion'' most naturally applies to networks of people discussing, expressing, and forming ideas (and opinions) about sociopolitical issues (e.g., governance, economics). Models of opinion dynamics have a long history~\citep{DeGroot1974,FriedkinJohnsen1999,CisnerosVelarde2019PolarizationAF,OlfatiSaber2004,Altafini2013}. Many modeling efforts focus on understanding the origin of polarized political views~\citep{McCarty2019,Galam1991,Macy2003,Axelrod1997,Dandekar2013}. Our modeling approach  generalizes many existing opinion-formation models~\citep{bizyaeva2020general} and provides the means towards a unified theoretical viewpoint on political polarization~\cite[Section~3.3]{AF-MG-AB-NEL:20}.

Results from a recent experiment~\citep{macy2019opinion} support a ground-breaking hypothesis about the origin of polarized political views. Namely, in a network of initially neutral agents with heterogeneous political predispositions, 
political polarization arises from opinion cascades triggered by early movers (i.e., the cascade seeds)   with the final opinion state unpredictable from the original political predispositions. Our model provides a rigorous explanation of these observations. The pattern of opinion formation reached at the end of a disagreement cascade is largely determined by the complex network topology and the seed placement, more than the intrinsic biases (i.e., political predispositions) of the agents. For fixed agent biases, different random distribution of seeds or changes in the random network topology can lead to completely different opinion formation patterns at the end of the cascade in our model. We are currently working on interdisciplinary sociopolitical research exploring these and other ideas.


%
\section*{Conflict of interest}
The authors declare that they have no conflict of interest.

\bibliographystyle{myplainnat}
\bibliography{references}   

\end{document}